\newtheorem{theorem}{Theorem}
\newtheorem{lemma}[theorem]{Lemma}
 \newtheorem{definition}[theorem]{Definition}
\newcommand{\ceiling}[1]{\left\lceil #1 \right\rceil}
\newcommand{\floor}[1]{\left\lfloor #1 \right\rfloor}
\begin{document}

\begin{frontmatter}

\title{Streaming Approximation Scheme for Minimizing Total Completion Time on Parallel Machines
Subject to Varying Processing Capacity
}

\author[inst1]{Bin Fu}
\ead{bin.fu@utrgv.edu}
\affiliation[inst1]{organization = {Department of Computer Science},
addressline = {University of Texas Rio Grande Valley}, city = { Edinburg}, state = {TX}, postcode={78539}, country={USA}}

\author[inst2]{Yumei Huo}
\ead{yumei.huo@csi.cuny.edu}
\affiliation[inst2]{organization = {Department of Computer Science},
addressline = {College of Staten Island, CUNY}, city = { Staten Island}, state = {NY}, postcode={10314}, country={USA}}

\author[inst3]{Hairong Zhao  \corref{cor1}}
\cortext[cor1]{Corresponding author}
\ead{hairong@purdue.edu}
\affiliation[inst3]{organization = {Department of Computer Science},
addressline = {Purdue University Northwest}, city = { Hammond}, state = {IN}, postcode={46323}, country={USA}}

\begin{abstract}
We study the problem of minimizing total completion time on parallel machines subject to varying processing capacity.
In this paper, we develop an approximation scheme for the problem under the data stream model where the input data is massive and cannot fit into memory and thus can only be scanned for a few passes.
Our algorithm can compute the approximate value of the optimal total
completion time in one pass and output the schedule with the
approximate value in two passes.
\end{abstract}

\begin{keyword}
 streaming algorithms, scheduling,  parallel machines, total completion time, varying processing capacity
\end{keyword}

\end{frontmatter}

\section{Introduction}
In 1980, Baker and Nuttle \cite{bn80} studied the problem of scheduling $n$ jobs that require a single resource whose availability varies over time. This model was motivated  by the situations in which machine availability may be temporarily reduced to conserve energy or interrupted for scheduled maintenance. It also applies to situations in which processing requirements are stated in terms of man-hours and the labor availability varies over time. One example application is rotating Saturday shifts, where a company only maintains a fraction, for example 33\%, of the workforce every Saturday.

In 1987, Adiri and Yehudai \cite{ay87} studied the scheduling problem on single and parallel machines where the service rate of a machine remains constant while a job is being processed and may only be changed upon its completion. A simple application example is a machine tool whose performance is a function of the quality of its cutters which can be replaced only upon completion of one job.

In 2016, Hall et. al \cite{hll16} proposed a new model of multitasking via shared processing which allows a team to continuously work on its main, or primary tasks while a fixed percentage of its processing capacity may be allocated to process the routinely scheduled activities such as administrative meetings, maintenance work or meal breaks.
In these scenarios, a working team can be viewed as a machine with reduced  capacity in some periods for processing primary jobs.  A manager needs to decide how to schedule the primary jobs on these shared processing machines so as to optimize some performance criteria. In  \cite{hll16}, the authors studied the single machine environment only. They assumed that the processing capacity allocated to all the routine jobs is a constant $e$ that is strictly less than 1.

Similar models also occur in queuing system where the number of servers can change, or where the service rate of each server can change. In \cite{t86}, Teghem defined these models as the vacation models and the variable service rate models, respectively.  As Doshi pointed out in \cite{d86}, queuing systems where the server works on primary and secondary customers arise naturally in many computer, communication and production systems. From the point of the primary customers' view, the server working on the secondary customers is equivalent to the server taking a vacation and not being available to the primary customers during this period.

In this paper, we extend the research on scheduling subject to varying machine processing capacity and study the problems under parallel machine environment so as to optimize some objectives. 
In our model, we allow different processing capacity during different periods and the change of the processing capacity is independent of the jobs.
Although in some aspects the problems have been intensively studied, such as scheduling subject to unavailability constraint, this work is targeting on a more general model compared with all the above mentioned research. 
This generalized model apparently has a lot of applications which have been discussed in the above literature. 
Due to historic reasons and different application contexts, different terms have been used in literature to refer to  similar concepts, including service rate \cite{ay87}\cite{t86}, processing capacity \cite{awkl19}\cite{ccc10}\cite{jkt11}, machine capacity\cite{hll16}, sharing ratio\cite{hll16}, etc. In this paper, we will adopt the term processing capacity to refer to the availability of a machine for processing the jobs.

For the proposed general model, in \cite{fhz22} we have studied some problems under the traditional  data  model  where all data can be stored locally and accessed  in constant time. In this paper, we will study the proposed model under the data stream environment where the input data is massive and cannot be read into memory. Specifically, we study the data stream model of our problem where the number of jobs is so big that jobs' information cannot be stored but can only be scanned in one or more passes. This research, as in many other research areas, caters to the need for providing  solutions under big data that also emerges in the area of scheduling. 

As Muthukrishnan wrote in his paper \cite{m05}, 
in the modern world, with more and more data generated, the automatic data feeds are needed for many tasks in the monitoring applications such as atmospheric, astronomical, networking, financial, sensor-related fields, etc. 
These tasks are time critical and thus it is important to process them in almost real time mode so as to accurately keep pace with the rate of stream updates and  reflect rapidly changing trends in the data. 
Therefore, 
the researchers are facing the questions under the current and future trend of more demands of data streams processing: what can we (not) do if we are given a certain amount of resources, a data stream rate and a particular analysis task?


A natural approach to dealing with large amount of data of these time critical tasks involves approximations and developing data stream algorithms. Streaming algorithms were initially studied by Munro and Paterson in 1978 (\cite{mp78}), and then by Flajolet and Martin in 1980s (\cite{fm85}). The model was formally established by Alon, Matias, and Szegedy in ~\cite{ams99} and has received a lot of attention since then. 

In this work our goal is to design streaming algorithms for our proposed problem to approximate the optimal solution in a limited number of passes over the data and using limited space. 
Formally, streaming algorithms are algorithms for processing the input where some or all of of the data is not available for random access but rather arrives as a sequence of items and can be examined in only a few passes (typically just one). 
The performance of streaming algorithms is measured by three factors: the number of passes the algorithm must run over the stream, the space needed and the update time of the algorithm.


\subsection{Problem Definition}

Formally our scheduling model can be defined as follows. There is a set $N = \{1, \cdots, n\}$ of $n$ jobs and $m$ parallel machines where the processing capacity of machines varies over time. Each job $j \in N $ has a processing time $p_j$ and can be processed by any one of the machines uninterruptedly.
Associated with each machine $M_i$ are $l_i$ continuous intervals during which the  processing capacity of $M_i$ are $\alpha_{i,1}$, $\alpha_{i, 2}$, $\ldots$, $\alpha_{i,l_i}$, respectively, see Figure~\ref{fig:varying-processing-capacity} for an example.
\begin{figure}[h]
\includegraphics[width=\textwidth]{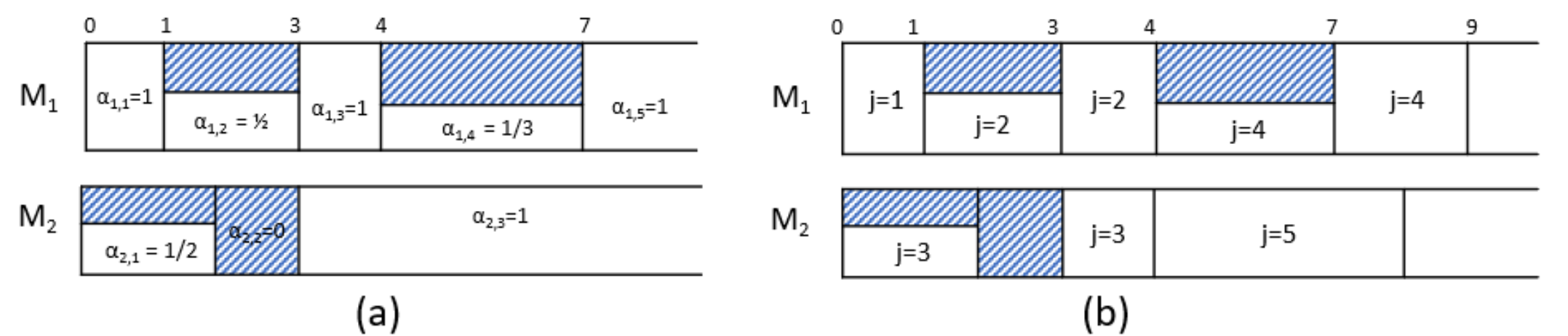}
\caption{ (a) Two machines with varying processing capacity, (b) A schedule of the 5  jobs, where $p_1 = 1$, $p_2 = p_3 = 2$, $p_4 = 3$, $p_5= 4$. } \label{fig:varying-processing-capacity}
\end{figure}
If the machine $M_i$ has full availability during an interval, we say the machine's processing capacity is $1$ and a job, $j$, can be completed in $p_j$ time units; otherwise, if the machine $M_i$'s processing capacity is less than $1$, for example $\alpha$, then  $j$ can be completed in $p_j/\alpha$ time units. 

The objective is to minimize the total completion time of all jobs. For any schedule $S$, let $C_j(S)$ be the completion time of the job $j$ in $S$. If the context is clear, we use $C_j$ for short. The total completion time of the schedule $S$ is $\sum_{1\le j \le n} C_j$. 

In this paper, we study our scheduling problem under data stream model. 
The goal is to design streaming algorithms to approximate the optimal solution in a limited number of passes over the data and using limited space.
Using the three-field 
notation introduced by Graham et al. \cite{gllr79}, our problem is denoted as $P_m, \alpha_{i,k} \mid stream  \mid \sum C_j$; 
if for all intervals, the processing capacity is greater than or equal to a constant   $\alpha_0$, $0 < \alpha_0 \le 1$, our problem is denoted as $P_m, \alpha_{i,k} \ge \alpha_0  \mid stream \mid \sum C_j$.

\subsection{Literature Review}
We first review the related work under the traditional data model.
The first related work is done by Baker and Nuttle \cite{bn80} in 1980. They studied the problems of sequencing $n$ jobs for processing by a single resource to minimize a function of job completion times subject to the  constraint that the availability of the resource varies over time. Their work showed that a number of well-known results for classical single-machine problems can be applied with little or no modification to the corresponding variable-resource problems.
Adiri and Yehudai \cite{ay87} studied the problem on single and parallel machines with the restrictions such that if a job is being processed, the service rate of a machine remains constant and the service rate can be changed only when the job is completed. In 1992, Hirayama and Kijima \cite{hk92} studied this problem on single machine where the machine capacity varies stochastically over time.

In 2016, Hall et. al. in \cite{hll16} studied similar problems in multitasking environment, where a machine does not always have the full capacity for processing primary jobs due to previously  scheduled routine jobs. In their work, they assume there is a single machine whose processing capacity is either 1 or a constant $e$ during an interval. They showed that the total completion time can be minimized by scheduling the jobs in non-decreasing order of the processing time, but  it  is unary NP-Hard for the objective function of total weighted completion time. 

Another widely studied model is scheduling subject to machine unavailability constraint where
the machine has either full capacity, or zero capacity so no  jobs can be processed. Various performance criteria and machine environments have been studied under this model, see the survey papers \cite{sc2000} and \cite{mcz10} and the references therein. Among the results, for the objective of minimizing total completion time on parallel machines, the problem is NP-hard.

Other scheduling models with  varying  processing capacity  are also studied in the literature where variation of machine availability are related  with the jobs that have been scheduled. These models include scheduling with learning effects (see the survey paper \cite{jkt11} by Janiak et al. and the references therein), scheduling with deteriorating effects (see the paper \cite{ccc10} by Cheng et al.),  and  interdependent processing capacitys (see the paper \cite{awkl19} by Alidaee et al. and the references therein), etc.

In our model, there are multiple machines, the processing capacity of each machine can change between 0 and 1 from interval to interval and the capacity change can happen while a job is being processed. The goal is to find a schedule to minimize total completion time. 
In \cite{fhz22} we 
have showed that there is no polynomial time approximation algorithm unless $P=NP$ if the processing capacities  
are arbitrary for all machines. Then for the problem where the sharing ratios on some machines have a constant lower bound, we analyzed the performance of some classical scheduling algorithms and developed a polynomial time approximation scheme
when the number of machines is a constant.


We now review the related work under the data stream model. Since the streaming algorithm model was formally established by
Alon, Matias, and Szegedy in ~\cite{ams99}, it has received a lot of attention.
While streaming algorithms have been studied in the field of statistics, optimization, and graph algorithms (see surveys by Muthukrishnan \cite{m05} and McGregor \cite{mg14}) and some research results have been obtained, very limited research on streaming algorithms (\cite{cv21} \cite{fhz22-2}) has been done  in the field of sequencing and scheduling so far. For the proposed model, in \cite{fhz22-2} we developed a streaming algorithm for the problem with the objective of makespan minimization. 

\subsection{New Contribution}

In this paper, 
we present the first efficient streaming  algorithm  for the problem
$P_m, \alpha_{i,k}  \ge \alpha_0 \mid stream \mid \sum C_j$ . Our streaming algorithm can compute an
approximation of the optimal value of the total completion time in
one pass, and  output a schedule that approximates the optimal
schedule in two passes.

\section{A PTAS for $P_m, \alpha_{i,j} \ge \alpha_0 \mid stream \mid \sum C_j$}
In this section, we develop a streaming algorithm  for our problem when the processing capacities on all machines have a positive constant lower bound, that is, $P_m, \alpha_{i,k} \ge \alpha_0\mid stream \mid \sum C_j$. The algorithm is a PTAS and uses sub-linear space when $m$ is a constant.

At the conceptual level, our algorithm has the following two stages:
\begin{itemize}
    \item[] \textbf{Stage 1:} Generate a sketch of the jobs while reading the input stream.  
    \item[] \textbf{Stage 2:} Compute an approximation of the optimal value based on the sketch.  
\end{itemize}

In the following two subsections, we will describe each stage in detail. 

\subsection{Stage 1: Sketch Generation from the Input Stream}
In this step we read the job input stream $N = \{i: 1 \le i \le n\}$ and generate a sketch of $N$ using rounding technique. The idea of the procedure in this stage is to split the jobs into large jobs and small jobs and round the processing time of the large jobs so that the number of distinct processing times is reduced. Specifically, the sketch is a set of pairs where each pair contains a rounded processing time and the number of jobs with this rounded processing time. 
Let  $p_{max} = \max_{j \in N} {p_j}$.  The {\bf sketch}  of $N$ can  be formally defined as follows:
\begin{definition}
Given the error parameter $\epsilon$ and the lower bound of processing capacity of the machines $\alpha_0$, let $N_L = \{ j \in N:  p_j  \ge  \tfrac{\epsilon \alpha_0}{3n^2} p_{max}\} $ denote the set of large jobs.  Let $\tau =  \tfrac{ \epsilon \alpha_0}{15}$, for each job $j$ of $N_L$, we round up its processing time such that if  $p_j \in [ (1+\tau)^{k-1}, (1  + \tau)^{k}) $, then its rounded processing time is $rp_k=\floor{(1+\tau)^{k}}$. We denote the {\bf sketch}  of $N$ by  $N'_L = \{(rp_k, n_k):  k_0 \le k \le k_1 \}$, where $n_k$ is the number of jobs whose rounded processing time is $rp_k$ and $k_0$ and $k_1$ are the integers such that $\tfrac{\epsilon \alpha_0}{3n^2} p_{max} \in [(1+\tau)^{k_0-1}, (1+\tau)^{k_0})$ and $p_{max} \in  [(1+\tau)^{k_1-1}, (1+\tau)^{k_1})$ respectively. 
\end{definition}

When $n$ and $p_{max}$ are known before reading the job stream, one can generate the sketch from the job input stream  with the following simple procedure: Whenever a job $j$ is read, if it is a small job, skip it and continue. Otherwise, it is a large job. Suppose that $p_j \in  [(1+\tau)^{k-1}, (1  + \tau)^{k})$, we update  the pair $(rp_k, n_k)$  by increasing $n_k$ by 1, where $rp_k = \floor{(1+\tau)^{k}}$. 

In reality, however, $n$ and $p_{max}$ may not be obtained accurately without scanning all the jobs. Meanwhile, in many practical scenarios, the estimation of $n$ and $p_{max}$ could be obtained based on priori  knowledge. Specifically, an upper bound of $n$, $n'$, could be given such that $n \le n' \le c_1n$ for some $c_1 \ge 1$; and a lower bound of $p_{max}$, $p'_{max}$, could be given such that $1\le p_{max}'\le p_{max}\le  c_2p_{max}'$ for some $c_2 \ge 1$.
Depending on whether $n'$ and $p'_{max}$ are known beforehand, we have four cases: (1) both $n'$ and $p'_{max}$ are known; (2) only $p'_{max}$ is known; (3) only $n'$ is known; (4) neither $n'$ nor $p'_{max}$  is known. 

For all four cases, we can follow the same procedure below to get the sketch of the job input stream. The main idea is as we read each job $j$, we dynamically update the maximum processing time, $p_{curMax}$, the total number of jobs, $n_{cur}$,  the threshold of processing time for large jobs, $p_{minL}$,  and  the sketch if needed.   For convenience, in the following procedure we treat $\infty$ as a number, and $1/\infty$ as  $0$.

\medskip
{\noindent \bf Algorithm for constructing sketch $N'_L$}   

 \begin{enumerate} [label={\arabic*.}]
    \item  Let $\tau =  \tfrac{ \epsilon \alpha_0}{15}$
    \item if $n'$ is not given, set $n' = \infty$
    \item if $p'_{max}$ is not given, set  $p'_{max} = 1$ 
    \item $p_{minL} = \max\{\tfrac{\epsilon\alpha_0}{3(n')^2} p'_{max}, 1 \}$,
     \item Initialize 
     $p_{curMax}= 0$, $n_{cur} = 0$, and  $N''_{L} = \emptyset$ 
     
    \item Construct $N''_{L}$ while repeatedly reading next $p_j$
    \begin{enumerate}[label*={\alph*.}]
        \item $n_{cur} = n_{cur}+1$
       \item  If $p_j > p_{curMax}$
         \item[]   \hspace{0.2in}     $p_{curMax}=p_j$
       \item[] \hspace{0.2in} if $p_{minL}  < \tfrac{\epsilon\alpha_0} { 3(n')^2} \cdot p_{curMax}$, $p_{minL}  = \tfrac{\epsilon\alpha_0} { 3(n')^2} \cdot p_{curMax}$

       \item If $p_j \ge   p_{minL}$ 
             \begin{enumerate} [label*={\arabic*}]
             \item $rp = \floor{(1+\tau)^k}$ where $p_j \in [ (1+\tau)^{k-1}, (1  + \tau)^{k}) $
             \item if there is a tuple $(rp_k, n_k)  \in N''_L$ where $rp_k = rp$, 

             \item  then update $n_k = n_k +1$
              \item \label{sketch-insert} else 
                     
                     \item[] \hspace{0.2in} $N''_{L} = N''_{L} \cup \{( rp, 1) \}$ 
              \end{enumerate} 
    \end{enumerate}
    \item $p_{max} = p_{curMax}$, $n = n_{cur}$,  $p_{minL} = \tfrac{\epsilon\alpha_0}{3n^2} p_{max}$
    \item Let $N'_L = \{ ( rp_k, n_k): ( rp_k, n_k) \in N''_L, \text { and } rp_k > p_{minL} \}$
    
    \end{enumerate}

While the above procedure can be used for all four cases, we need  different data structures and implementations  in each case to achieve time and space efficiency. For cases (1) and (2) where $p'_{max}$ is known, since $1\le p_{max}'\le p_{max}\le  c_2p_{max}'$ for some constant $c_2$, there are at most $\log_{1+ \tau}  { c_2 p'_{max} }$ distinct rounded processing times, and thus we can use an array to store $N''_L$. For cases (3) and (4) where no information about $p'_{max}$ is known, we can use a B-tree  to store the elements of   $N''_L$. Each node in the tree corresponds to an element $(rp_k, n_k)$ with $rp_k$ as the key. With $p_{curMax}$ being dynamically updated, there are at most $\log_{1+ \tau}  { p_{curMax} }$ distinct rounded processing times, and thus at most $\log_{1+ \tau}  {p_{curMax} }$ nodes in the B-tree at any time.  
As each job $j$ is read in, we may need to insert a new node to B-tree. If $p_j>p_{curMax}$, $p_{curMax}$ needs to be updated and so does $p_{minL}$, which is the threshold of processing time for large jobs. Hence the nodes with the key less than $p_{minL}$ should be removed. To minimize the worst case update time for each job, each time when a new node is inserted, we would delete the node with the smallest key if it is less than $p_{minL}$. 

The  following lemma gives   the space and time complexity for computing sketch $N'_L$ from the job input stream for all four cases.

\begin{lemma}\label{lemma:computing-sketch} Let  $\alpha_0$ and $\epsilon$ be real numbers in $(0,1]$. We can compute the sketch $N'_L$ of job input stream in one pass with the following performance:
    \begin{enumerate}
        \item Given both an upper bound $n'$ for $n$  and a lower bound $p'_{max}$ for $p_{max}$ such that   $n\le n'\le c_1 n$ and   $1\le p_{max}'\le p_{max}\le  c_2p_{max}'$ for some $c_1$ and $c_2$,
        then it takes $O(1)$  update time and
         $O(\tfrac{1}{ \epsilon\alpha_0}\min (\log n+\log \tfrac {c_1c_2} {\epsilon\alpha_0},\log p_{max}+\log c_2))$ space to process each job from the stream. 
      \item Given only a lower bound $p'_{max}$ for $p_{max}$ where  $1\le p_{max}'\le p_{max}\le  c_2p_{max}'$,  then it  takes  $O(1)$  update time  and 
         $O(\tfrac{1}{\epsilon\alpha_0}(\log p_{max}+\log c_2))$ space to process each job in the stream.
    \item Given only  an upper bound $n'$ for $n$ such that   $n\le n'\le c_1 n$,  then   it  takes  $O(\log (\tfrac{1}{ \epsilon\alpha_0}) + \min( \log (\log n+\log \tfrac{c_1}{  \epsilon\alpha_0})), \log \log p_{max})$  update time, and 
         $O(\tfrac{1}{\epsilon\alpha_0}\min(\log n+\log \tfrac{c_1}{  \epsilon\alpha_0},\log p_{max}))$ space to process each job in the stream.
     \item Given no information about $n$ and $p_{max}$, then it takes  $O(\log (\tfrac{1}{ \epsilon\alpha_0} + \log \log p_{max}))$  update time, and      $O(\tfrac{1}{ \epsilon\alpha_0}\log p_{max})$ space to process each job in the stream.
    \end{enumerate}
\end{lemma}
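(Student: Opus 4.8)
To prove Lemma~\ref{lemma:computing-sketch} I would first check that the displayed algorithm really does output the sketch $N'_L$, and then bound its working space and its worst-case per-job update time separately in each of the four cases; the ``one pass'' claim is immediate, since the stream $p_1,p_2,\dots$ is read once in order. For correctness the key point is that the running threshold $p_{minL}$ used inside the loop is always at most the true threshold $\tfrac{\epsilon\alpha_0}{3n^2}p_{max}$: at every moment $p_{minL}=\max\{1,\tfrac{\epsilon\alpha_0}{3(n')^2}p'_{max},\tfrac{\epsilon\alpha_0}{3(n')^2}p_{curMax}\}$ (with the convention $1/\infty=0$), and $n'\ge n$, $p'_{max}\le p_{max}$, $p_{curMax}\le p_{max}$ make each term at most $\max\{1,\tfrac{\epsilon\alpha_0}{3n^2}p_{max}\}$. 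Hence no large job is ever skipped when it is read, and since $p_{minL}$ only increases, a skipped job is small with respect to the final threshold as well, so it is correctly never counted; thus after the loop every bucket $rp_k$ carries the correct count of jobs with $p_j$ at least the final value of $p_{minL}$, and steps~7--8, which discard buckets with $rp_k\le\tfrac{\epsilon\alpha_0}{3n^2}p_{max}$, leave exactly $N'_L$ (up to the unavoidable factor-$(1+\tau)$ ambiguity at the boundary bucket $k_0$, which is inside the rounding tolerance built into the definition of the sketch).

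In Cases~(1) and~(2), $p'_{max}$ is known, so $p_{max}\le c_2p'_{max}$ and every rounded value ever stored is $\floor{(1+\tau)^k}$ for an exponent $k$ between $k_{\min}$, the exponent of the initial $p_{minL}=\max\{1,\tfrac{\epsilon\alpha_0}{3(n')^2}p'_{max}\}$, and $k_{\max}=O(\log_{1+\tau}(c_2p'_{max}))$. Here I would keep $N''_L$ in a flat array indexed by $k-k_{\min}$, so each job is processed in $O(1)$ time (compute $k$ from $p_j$, index, increment, refresh $p_{curMax}$ and $p_{minL}$), and the array uses
\[
k_{\max}-k_{\min}+O(1)=O\!\left(\log_{1+\tau}\min\Bigl\{\,c_2p'_{max},\ \tfrac{3c_2(n')^2}{\epsilon\alpha_0}\,\Bigr\}\right)
\]
cells; using $\log_{1+\tau}x=\Theta(\tfrac{1}{\epsilon\alpha_0}\log x)$ together with $p'_{max}\le p_{max}$ and $n'\le c_1n$, this is exactly the bound in item~(1), and it collapses to item~(2) when $n'=\infty$ (so $k_{\min}=0$).

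In Cases~(3) and~(4), $p_{max}$ is not available, so the range of exponents is unknown a priori and the array cannot be pre-sized; instead I would store $N''_L$ in the B-tree keyed by $rp_k$, so each job costs $O(1)$ arithmetic plus $O(\log(\text{tree size}))$ for a lookup and increment, at most one insertion, and at most the one conditional deletion of the smallest key specified by the algorithm. Everything then reduces to showing that the tree never holds more than $O\!\bigl(\log_{1+\tau}((1+\tau)\min\{p_{curMax},\tfrac{3(n')^2}{\epsilon\alpha_0}\})\bigr)$ nodes: simplifying this quantity gives the claimed space, and taking one more logarithm gives the claimed update time, in items~(3) and~(4) (with $n'=\infty$ for~(4)).

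I would establish the tree-size bound through the following invariant. Call a key \emph{active} if it lies in the window $[p_{minL},(1+\tau)p_{curMax}]$ and \emph{stale} otherwise; since here $p_{minL}=\max\{1,\tfrac{\epsilon\alpha_0}{3(n')^2}p_{curMax}\}$, the window has width $\log_{1+\tau}\tfrac{(1+\tau)p_{curMax}}{p_{minL}}=\log_{1+\tau}\bigl((1+\tau)\min\{p_{curMax},\tfrac{3(n')^2}{\epsilon\alpha_0}\}\bigr)$ throughout, which is the target bound. Every newly inserted key is active (the triggering job has $p_{minL}\le p_j$ and $rp_k\le(1+\tau)p_j\le(1+\tau)p_{curMax}$), every increment touches an active key, a job whose rounded value is a stale key has $p_j\le rp_k<p_{minL}$ and is therefore skipped, and stale keys, once deleted, never reappear; hence the tree can grow in size only on a step whose conditional delete-min does not fire, i.e.\ where the current minimum key is $\ge p_{minL}$, i.e.\ where \emph{every} key is active, at which moment the tree size is at most the window width (plus one for the key just inserted), and non-growth steps do not increase the size. \textbf{The main obstacle is precisely this last observation:} a single job that multiplies $p_{curMax}$ by a huge factor turns a whole band of keys stale at once while the lazy rule removes only one per insertion, so a naive count would allow the tree to blow up to $\Theta(\tfrac{1}{\epsilon\alpha_0}\log p_{max})$; the fact that the size grows only when nothing is stale, combined with stale keys never coming back, is what keeps the size pinned to the window width and so delivers the sharper $\log n$ branch of items~(1) and~(3).
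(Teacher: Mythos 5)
Your proof is correct and follows essentially the same route as the paper: an array indexed by the rounded exponent with $O(1)$ update for the two cases where $p'_{max}$ is known, a B-tree keyed by $rp_k$ for the other two, and in all cases bounding the storage by the number of distinct rounded values between the (running) threshold $p_{minL}$ and $(1+\tau)p_{curMax}$, with $\log_{1+\tau}x=\Theta(\tfrac{1}{\epsilon\alpha_0}\log x)$ giving the stated bounds. The only difference is that your active/stale invariant explicitly justifies why the lazy delete-one-minimum-per-insertion policy keeps the B-tree size within the window width even when $p_{curMax}$ jumps, a point the paper simply asserts; this is a welcome extra detail, not a deviation in approach.
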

\begin{proof} 
We give the proof for four cases separately as follows:

{\noindent \bf Case 1:}  Both $n'$ and $p'_{max}$ are given such that $n\le n'\le c_1 n$, and $1\le p_{max}'\le p_{max}\le  c_2p_{max}'$ for some $c_1 >1$ and $c_2 > 1$.

\noindent From the algorithm, the processing time of a large job is at most $c_2 p'_{max}$   and  at least $p_{minL}=\max\{\tfrac{\epsilon\alpha_0}{3(n')^2}p'_{max}, 1\}$. Thus, the number of distinct processing times after rounding is at most $n''= \min (\log_{1+ \tau} \tfrac {c_2 \cdot 3(n')^2} {\epsilon\alpha_0}, \log_{1+ \tau}  { c_2 p'_{max} })$.  
We use an array of size $n''$ to store the elements of $N''_L$ and we have 
 \begin{align*}
n'' &= \min(\log_{1+ \tau} \tfrac{c_2 \cdot 3 (n') ^2} { \epsilon\alpha_0},  \log_{1+ \tau} c_2 p'_{max} ) \\  
&  =    \log_{1+ \tau} \min (  \tfrac{c_2 \cdot 3 (n') ^2} { \epsilon\alpha_0}, c_2 p_{max}) \\
 &  \le   \log_{1+ \tau} \min ( \tfrac{c_2 \cdot 3 c_1^2 n ^2} { \epsilon\alpha_0},  c_2 p_{max}) \\
  & = O(\tfrac{1}{ \tau}\min (\log n+\log \tfrac {c_1c_2} {\epsilon\alpha_0},\log p_{max}+\log c_2))\\
 & = O(\tfrac{1}{ \epsilon\alpha_0}\min (\log n+\log \tfrac {c_1c_2} {\epsilon\alpha_0},\log p_{max}+\log c_2)). 
\end{align*} 

It is easy to see  that  the update time for each job is $O(1)$ time.    
 
\smallskip
 {\noindent \bf Case 2:}
Only  $p'_{max}$,  $p_{max}'\le p_{max}\le  c_2p_{max}'$, is given. 

\noindent From the algorithm, the processing time of a large job is between $p_{minL} = 1$ and   $c_2 p'_{max}$,  thus the number of distinct processing time in $N''_L$ is at most $n''=   \floor{\log_{1+\tau} c_2 p'_{max}} \le \floor{\log_{1+\tau} c_2 p_{max}}  = O(\tfrac{1}{ \epsilon\alpha_0}(\log p_{max}+\log c_2))$. 
 
With the array of $n''$ elements to store the elements of  $N''_L$, the update time for each job is $O(1)$.
  
\smallskip
{\noindent \bf Case 3:} Only  $n'$, $n \le n' \le c_1 n$, is given.
 
We use a B-tree to store the elements of $N''_L$. Since $n'$ is given, we can calculate $p_{minL}$ based on the updated $p_{curMax}$, $p_{minL}=\max\{\tfrac{\epsilon\alpha_0}{3(n')^2}p_{curMax}, 1\}$.
So the number of nodes in the B-tree   is bounded by $\tfrac{p_{curMax}}{p_{minL}}$, which is  
\begin{align*}
 \min ( \log_{1+\tau} \tfrac {3n'^2}{\epsilon\alpha_0}, \log_{1+\tau} p_{max})) 
& = \log_{1+\tau}(\min ( \tfrac {3n'^2}{\epsilon\alpha_0}, p_{max})) \\
& \le \log_{1+\tau}(\min ( \tfrac{3 c_1^2 n^2}{  \epsilon\alpha_0},  p_{max})) \\
& =O(\tfrac{1}{\tau}\min (\log n+\log {c_1\over \epsilon\alpha_0},\log p_{max})) \\
& = O(\tfrac {1}{ \epsilon\alpha_0}\min(\log
n+\log \tfrac{c_1}{  \epsilon\alpha_0},\log p_{max})).
\end{align*}

For each large job, we need to perform at most three operations: a search operation, possibly  an insertion and a deletion. The time for each operation is at most the height of the tree:  
\begin{eqnarray*}  & & \log (O(\tfrac {1}{ \epsilon\alpha_0}\min(\log
n+\log \tfrac{c_1}{  \epsilon\alpha_0},\log p_{max}))) \\
& =  &  O(\log (\tfrac{1}{ \epsilon\alpha_0}) + \min( \log (\log n+\log \tfrac{c_1}{  \epsilon\alpha_0}), \log \log p_{max}))
\end{eqnarray*}
\smallskip
{\noindent \bf Case 4:} No information about $p_{max}$ and $n$ is known beforehand. We still use B-tree as Case 3. However, without information of $n$, $p_{minL}$ is always 1, the total number of nodes stored in the B-tree is at most $  O(\log_{1+\tau} p_{max})=O(\tfrac{1}{ \epsilon\alpha_0}\log
p_{max})$. The update time is thus $ O(\log (\log_{1+\tau} p_{max}) ) = O(\log \tfrac{1}{ 
\epsilon \alpha_0}+\log \log p_{max})$.
\end{proof}

\subsection{Stage 2: Approximation Computation based on the Sketch} 

In this stage, we will find an approximate value of the minimum total completion time for our scheduling problem based on the sketch $N'_L=\{(rp_k, n_k): k_0\le k \le k_1\}$ that is obtained from the first stage. The idea is to assign large jobs in the sketch $N'_L$ group by group in SPT order to $m$ machines where group $k$, $k_0 \le k \le k_1$, corresponds to pair $(rp_k, n_k)$ in sketch $N'_L$. After all groups of jobs are scheduled, we find the minimum total completion time among  the remaining  schedules, and return an approximate value. 

To schedule each group of jobs, we perform two operations:
\begin{itemize}[nosep]
\item[]{\bf  Enumerate: } enumerate all assignments of the jobs in the group to $m$ machines that satisfy certain property, 
and 
\item[]{\bf  Prune:} prune the schedules so that only a limited number of schedules are kept.
\end{itemize}

During the Enumerate operation, we enumerate the assignments of $n_k$ jobs from group $k$ to $m$ machines using {\it $(\delta, m)$-partition} as defined below.

\begin{definition}
For two positive integers $b$ and $m$, and a real $\delta>0$, a {\it $(\delta, m)$-partition} of $b$ is an ordered tuple $(b_1,b_2,\cdots, b_m)$
such that each $b_i$ is non-negative integer, $b= \sum_{1 \le i \le m} b_i$,
and there are at least $(m-1)$ integers $b_i$ is either  $0$ or
$\floor{(1+\delta)^q}$ for some integer $q$.
\end{definition}

For example, for $\delta=1$, to schedule a group of  $b = 9$ jobs to $m = 3$ machines, we enumerate those assignments corresponding to $(1, 3)$-partitions of $9$. From the  definition, some examples of $(1, 3)$-partitions of 9 are   $\{2, 2, 5\}$,  $\{0, 9, 0\}$,  $\{0, 1, 8\}$. Corresponding to the partition $\{2, 2, 5\}$, we schedule 2 jobs on the first machine, 2 jobs on the second machine and the remaining 5 jobs on the last machine. By definition,   $\{2, 2, 5\}$ and $\{2, 5,  2\}$ are two different partitions corresponding to two different schedules.

During the Prune operation, we remove some schedules so that only a limited number of schedules are kept. 
Let $S$ be a schedule of the jobs on $m$ machines, we use   $P_i(S)$ to denote the total processing time of the jobs assigned to machine $M_i$ in $S$; and use $\sigma_i(S)$ to denote  the total completion time of the jobs scheduled to $M_i$ in $S$. The schedules are pruned so that no two schedules are  ``similar'',  where ``similar'' schedules are defined as follows.

 \begin{definition}
 Two schedules  $S_1$, $S_2$  are ``similar'' with respect to a given
parameter $\delta$  if for every $1 \le i \le m$, $P_i(S_1)$ and
$P_i(S_2)$ are both in an interval $[(1+\delta)^x, (1 +
\delta)^{x+1})$ for some integer $x$, and $\sigma_i(S_1)$ and
$\sigma_i(S_2)$ are both in an interval $[(1+\delta)^y, (1 +
\delta)^{y+1})$ for some integer $y$. We use $S_1 \overset{\delta}{
\approx}  S_2$ to denote $S_1$ and $S_2$ are ``similar'' with
respect to $\delta$. 
 \end{definition}


Our complete algorithm for stage 2 is formally presented below in which  the Enumerate and Prune operations are performed in Step (\ref{enumerate-step}) and   (\ref{prune-step}), respectively.
  
\medskip
{\noindent \bf Algorithm for computing the approximate value}  

\noindent {\bf Input}: $N'_L = \{ (rp_k, n_k):  k_0  \le k \le k_1 \}$
 
\noindent{\bf Output:} An approximate value of the minimum total completion time for jobs in $N$
\smallskip

\noindent{\bf Steps}
\begin{enumerate}[label={\arabic*.}]
\item let $\delta $ be a positive real such that $\delta  < \tfrac{\epsilon \cdot \alpha_0}{24 (k_1 - k_0 + 1)}$
\item $U_{k_0-1} = \emptyset$
\item Compute $U_k$, $  k_0 \le k \le k_1$, which is a set of schedules of jobs from the groups $k_0$ to $k$  
\begin{enumerate} [label*={\alph*}]
\item let $U_{ k} = \emptyset$
\item \label{enumerate-step} for each $(\delta, m)$-partition  of $n_k$

    \begin{enumerate}
     \item[] for each schedule $S_{k-1} \in U_{ k-1}$
     \item[] \hspace{0.15in} 
      schedule the jobs of the group $k$ to the end of $S_{k-1}$ based
       \item[] \hspace{0.15in} on the partition and let the  new schedule be $S_k$
     \item[] \hspace{0.15in}  $U_{k} = U_{ k} \cup \{S_k\}$
    \end{enumerate}
\item \label{prune-step}prune $U_k$ by repeating the following until $U_k$ can't be reduced
    \begin{enumerate}
    \item [] if there are two schedules $S_1$ and $S_2$ in $U_k$ such that $S_1 \overset{\delta }\approx S_2$
    \item[] \hspace{0.2in} $U_{ k} = U_{ k} \setminus \{S_2\}$
    \end{enumerate}
\end{enumerate}
\item Let  $S' \in U_{k_1}$  be the schedule  that has the minimum   total completion time, $\sigma(S')= \Sigma _{1 \le i \le m} \sigma_i(S')$. 
\item return $(1 + {\epsilon}/{3}) (1+{\epsilon}/{15})\sigma(S') $  as an approximate value of the minimum total completion time of the jobs in $N$.
\end{enumerate}

Before we analyze the performance of the above procedure, we first consider a special case of our scheduling problem: all jobs have equal processing time and there is a single machine that has the processing capacity at least $\alpha_0$ at any time.  Suppose that the jobs are continuously scheduled on the machine. The following lemma shows how the total completion time of these jobs changes if we shift the starting time of these jobs and/or insert additionally a small number of identical jobs at the end of these jobs.

\begin{lemma}\label{claim-same-length-job}
Let $S_x$ be a schedule of $x$ identical jobs of processing time $p$
starting from time $t_0$ on a single machine whose processing capacity is
at least $\alpha_0$ at any time.  Then we have the following cases:
 \begin{enumerate}
 [label={(\arabic*)}]
    \item\label{prune-claim-1}   $  x \cdot t_0 + \tfrac{x(1+x)}{2} p   \le  \sigma (S_x) \le  x \cdot t_0 +  \tfrac{x(1+x)}{2 \cdot \alpha_0} p $.

     \item\label{prune-claim-2} If we shift all jobs in $S_x$ so the first job starts at $(1+\delta)t_0$ and get a new schedule $S_x^{1}$, then   $ \sigma(S_x^{1}) \le  ( 1 + \tfrac{\delta}{\alpha_0}) \sigma (S_x) $.

     \item\label{prune-claim-3} If we add additional $ \floor{x \delta}  $ identical  jobs at the end of $S_x$ and  get a new schedule $S_x^{2}$, then its total completion time is $ \sigma(S_x^{2}) \le  ( 1 + \tfrac{3\delta}{\alpha_0}) \sigma(S_x)$.

     \item\label{prune-claim-4} Let $S_x^{3}$ be a schedule  of $(x   + \floor {x \delta'}) $ identical jobs of processing time $p$ starting from time $(1 + \delta'')t_0$
      then $\sigma(S_x^{3}) \le  ( 1 + \tfrac { \delta'' } {\alpha_0} ) ( 1 + \tfrac {3 \delta' } {\alpha_0} ) \sigma(S_x)$
 \end{enumerate}
\end{lemma}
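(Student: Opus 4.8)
The plan is to establish parts \ref{prune-claim-1}--\ref{prune-claim-4} in sequence, each building on the previous ones, so that \ref{prune-claim-4} follows essentially by composing \ref{prune-claim-2} and \ref{prune-claim-3}. First, for \ref{prune-claim-1}, I would argue about the completion time of the $j$-th job in $S_x$. Since the machine has capacity at least $\alpha_0$ at any time, each job occupies somewhere between $p$ and $p/\alpha_0$ time units; hence the $j$-th job completes no earlier than $t_0 + jp$ and no later than $t_0 + jp/\alpha_0$. Summing over $j = 1, \dots, x$ and using $\sum_{j=1}^x j = x(1+x)/2$ gives both the lower bound $x t_0 + \tfrac{x(1+x)}{2}p$ and the upper bound $x t_0 + \tfrac{x(1+x)}{2\alpha_0}p$. (One subtlety: an interval boundary where the capacity drops could fall in the middle of a job, but since capacity is $\ge \alpha_0$ throughout, the per-job duration bound $p/\alpha_0$ still holds, so the estimate is unaffected.)

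For \ref{prune-claim-2}, shifting the start from $t_0$ to $(1+\delta)t_0$ adds exactly $\delta t_0$ to the start time, hence adds at most $x \cdot \delta t_0$ to the total completion time (the processing durations themselves are unchanged, since the capacity profile is shifted along with the jobs — or, if one reads it as keeping the same profile, the durations only change within the $[p, p/\alpha_0]$ band and the bound still goes through). So $\sigma(S_x^1) \le \sigma(S_x) + x \delta t_0$. Now I invoke \ref{prune-claim-1}: from the lower bound, $x t_0 \le \sigma(S_x)$, and more precisely $x \delta t_0 \le \delta \cdot \sigma(S_x) \le \tfrac{\delta}{\alpha_0}\sigma(S_x)$, giving $\sigma(S_x^1) \le (1 + \tfrac{\delta}{\alpha_0})\sigma(S_x)$.

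For \ref{prune-claim-3}, adding $\floor{x\delta}$ identical jobs at the end: these new jobs start no earlier than the completion time of $S_x$, which is at most $t_0 + xp/\alpha_0$, and by \ref{prune-claim-1} applied to a block of $\floor{x\delta}$ jobs starting at that time, their total completion time is at most $\floor{x\delta}(t_0 + xp/\alpha_0) + \tfrac{\floor{x\delta}(1+\floor{x\delta})}{2\alpha_0}p$. I would bound $\floor{x\delta} \le x\delta$ and $1 + \floor{x\delta} \le x(1+\delta)$ (or a similar crude bound), then compare each resulting term against $\sigma(S_x) \ge x t_0 + \tfrac{x(1+x)}{2}p$ term by term: the $\floor{x\delta}\, t_0$ piece is at most $\tfrac{\delta}{\alpha_0}(x t_0)$, and the remaining pieces involving $p$ are each at most a constant multiple of $\tfrac{\delta}{\alpha_0} \cdot \tfrac{x(1+x)}{2}p$; collecting these yields the stated $(1 + \tfrac{3\delta}{\alpha_0})$ factor. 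The main obstacle is precisely this bookkeeping — getting the constant to come out as $3$ rather than something larger requires using the lower bound from \ref{prune-claim-1} carefully and exploiting that $x \ge 1$ so that $x(1+x)/2 \ge x$, which lets the ``$t_0$'' term and the ``$p$'' term be absorbed into a single clean factor.

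Finally, \ref{prune-claim-4} follows by composition: the schedule $S_x^3$ of $x + \floor{x\delta'}$ jobs starting at $(1+\delta'')t_0$ can be obtained from $S_x$ by first applying the operation of \ref{prune-claim-3} (add $\floor{x\delta'}$ jobs at the end, incurring factor $1 + \tfrac{3\delta'}{\alpha_0}$) and then the operation of \ref{prune-claim-2} (shift the whole block's start to $(1+\delta'')t_0$, incurring factor $1 + \tfrac{\delta''}{\alpha_0}$) — or in the other order, whichever makes the dependency on $t_0$ cleanest. Multiplying the two factors gives $\sigma(S_x^3) \le (1 + \tfrac{\delta''}{\alpha_0})(1 + \tfrac{3\delta'}{\alpha_0})\sigma(S_x)$. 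I would just need to check that applying \ref{prune-claim-2} to the already-augmented schedule $S_x^2$ (rather than to $S_x$) is legitimate — it is, since \ref{prune-claim-2} was stated for an arbitrary block of identical jobs starting at an arbitrary time $t_0$, and the augmented schedule is still such a block (the added jobs are identical to the originals) — so no new estimate is needed beyond what parts \ref{prune-claim-2} and \ref{prune-claim-3} already provide.
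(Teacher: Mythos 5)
Your parts (1), (3) and (4) track the paper's own proof: (1) via the per-job duration lying in $[p,\,p/\alpha_0]$; (3) by applying (1) to the appended block starting at the old makespan $t\le t_0+xp/\alpha_0$ and absorbing the resulting terms against the lower bound from (1) (your bookkeeping does close with the constant $3$, using $\floor{x\delta}\le x\delta$, $x\ge 1$ and $\delta\le 1$, exactly as the paper does); and (4) by composing (2) and (3), which is also the paper's one-line argument. The genuine gap is in part (2). The machine's capacity profile is fixed in time; it does not shift along with the jobs --- that is the point of the model, and of how this lemma is later used in Lemma~\ref{lemma-close-approx-after-prune}, where a block of jobs is pushed later into whatever capacity the machine happens to have at that later time. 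Under that reading your key claim, that delaying the start by $\delta t_0$ adds at most $x\cdot\delta t_0$ to the total completion time, is false: take one job with $p=10$, $t_0=10$, $\delta=1$, capacity $1$ on $[10,20]$ and capacity $\alpha_0=0.1$ afterwards; the completion time jumps from $20$ to $120$, an increase of $\delta t_0/\alpha_0=100$, not $\delta t_0=10$. Your fallback (``the durations only change within the $[p,p/\alpha_0]$ band'') does not rescue the step either: the crude bounds $\sigma(S_x^{1})\le x(1+\delta)t_0+\tfrac{x(1+x)}{2\alpha_0}p$ and $\sigma(S_x)\ge xt_0+\tfrac{x(1+x)}{2}p$ only give a factor on the order of $1/\alpha_0$, which exceeds $1+\delta/\alpha_0$ whenever $\alpha_0+\delta<1$.

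The repair is small and is what the paper does: compare the shifted and unshifted schedules on the \emph{same} capacity profile. The work the original schedule performs during $[t_0,(1+\delta)t_0]$ is at most $\delta t_0$ (capacity is at most $1$), and in the shifted schedule this displaced work is recovered at rate at least $\alpha_0$, so every job's completion time is delayed by at most $\delta t_0/\alpha_0$. Feeding this into your own chain, $\sigma(S_x^{1})\le\sigma(S_x)+x\,\delta t_0/\alpha_0\le(1+\tfrac{\delta}{\alpha_0})\sigma(S_x)$ by the lower bound $xt_0\le\sigma(S_x)$ from (1), so the stated factor survives; but as written your justification of (2) --- and hence of (4), which inherits it --- does not go through.
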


\begin{proof}
We will prove \ref{prune-claim-1}-\ref{prune-claim-4} one by one in order.
\begin{enumerate}
[label={(\arabic*)}]
      \item First it is easy to see that $x\cdot t_0 +\tfrac{x(1+x)}{2} p $ is the total completion time of the jobs when the machine's processing capacity is always 1, which is obviously a lower bound of $\sigma(S_x)$. When the machine's processing capacity is at least $\alpha_0$, then it takes at most $p/\alpha_0$ time to complete each job, thus   the total completion time is at most $  x \cdot t_0 + \tfrac{x(1+x)}{2 \cdot \alpha_0} p $.

     \item When we shift the jobs so that the first job starts $\delta t_0$ later, then the completion time of each job is increased by at most $\tfrac{ \delta t_0}{\alpha_0}$. Therefore,
     $$\sigma(S_x^{1}) \le \sigma(S_x) +  x \cdot \tfrac{ \delta   \cdot t_0 }{\alpha_0} \le  (1 + \tfrac{ \delta}{\alpha_0}) \sigma(S_x) \enspace.$$

     \item Suppose the last job in $S_x$ completes at time $t$. Then $t   \le t_0 + \tfrac{x p }{\alpha_0} $. When we add additional $\floor{x \delta}  $ jobs starting from $t$,  by \ref{prune-claim-1}, the total completion time of the additional jobs is at most $ (x \delta \cdot t  +  \tfrac{ x \delta (1+x\delta)}{2 \cdot \alpha_0} p ) $. Therefore,
     \begin{eqnarray*}
      \sigma(S_x^{2}) &  \le &
      \sigma(S_x) +  x \delta \cdot t  + \tfrac{ x \delta (1+x\delta)}{2 \cdot \alpha_0} p   \\
      &  \le &
        \sigma(S_x) +       x \delta \cdot  (t_0 + \tfrac{x p }{\alpha_0}) + \tfrac{\delta}{ \alpha_0} \tfrac{x ( 1 + x \delta)}{2} p \\
      &  \le &
        \sigma(S_x) +   \tfrac{\delta}{ \alpha_0}  (   x  \cdot t_0 + \tfrac{x ( 1 + x )}{2} p ) +      x \delta \cdot   \tfrac{x p }{\alpha_0} \\
     &  \le &  \sigma(S_x) +   \tfrac{\delta}{ \alpha_0}  \sigma(S_x) +   \tfrac{ 2 \delta}{ \alpha_0}  \sigma(S_x) \\
     &  \le &  ( 1 + \tfrac {3 \delta } {\alpha_0} )\sigma(S_x) \enspace.
     \end{eqnarray*}

     \item Follows from \ref{prune-claim-2} and \ref{prune-claim-3}.
 \end{enumerate}
\end{proof}
 
Now we analyze the performance of our algorithm. In step \ref{enumerate-step},  we only consider the schedules of the $n_k$ jobs corresponding to  {\it $(\delta, m)$-partitions}. Let $S$ be any schedule of the jobs in sketch $N'_L$, we will show that at the end of step \ref{enumerate-step}  there  is a schedule   $S_{\delta} \in U_{k}$  that is $\delta_k$-close  to $S$. Let $n_{i, k}(S)$ be the number of jobs from group $k$ that are scheduled on machine $M_i$ in $S$. The $\delta_k$-close schedule to $S$ is defined as follows. 

\begin{definition}
Let $k$ be an integer such that $k_0 \le k \le k_1$. 
We  say a schedule 
$S_{\delta}$  is a  {\bf $\delta_k$-close schedule to $S$} if
for the jobs in group $k$, the following conditions hold:
(1) In $S_{\delta}$, the schedule of jobs from the group $k$
form  a $(\delta, m)$ partition  of $n_k$; 
(2)  For at least $m-1$ machines, either $n_{i, k}(S_{\delta}) = 0$ or $ \ceiling{\log_{1 + \delta}n_{i, k}(S_{\delta})} = \ceiling{\log_{1 + \delta}n_{i, k}(S)}$.
\end{definition}  

\noindent By definition, if $S_{\delta}$ is $\delta_k$-close to $S$, then  $n_{i,k}(S_{\delta}) \le (1 + \delta) n_{i,k}(S)$ for all $i$,  $1 \le i \le m$. 

The following lemma shows that  
there is always a schedule $S_{\delta} \in U_k$ at the end of step \ref{enumerate-step} that is $\delta_k$-close to $S$.

\begin{lemma}\label{claim-delta-close-sumcj} For any schedule $S$, 
there exists a $\delta_k$-close schedule $S_{\delta} \in U_k$ at the end of step \ref{enumerate-step} that is $\delta_k$-close to $S$.
\end{lemma}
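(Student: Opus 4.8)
The plan is to prove the lemma by isolating the two facts it really rests on: that $U_{k-1}$ is non-empty when step~\ref{enumerate-step} processes group $k$, and that the group-$k$ assignment of $S$ can be matched, up to the $\delta_k$-close tolerance, by some $(\delta,m)$-partition of $n_k$. Granting both, the conclusion is forced by the shape of step~\ref{enumerate-step}, which appends \emph{every} $(\delta,m)$-partition of $n_k$ to the tail of \emph{every} schedule already in $U_{k-1}$: if $P=(b_1,\dots,b_m)$ is the matching partition and $S_{k-1}$ is any element of $U_{k-1}$, then the schedule $S_\delta$ obtained by appending the $n_k$ group-$k$ jobs to $S_{k-1}$ according to $P$ lies in $U_k$ at the end of step~\ref{enumerate-step}, and its group-$k$ assignment is exactly $P$, so $S_\delta$ is $\delta_k$-close to $S$.

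Non-emptiness of $U_{k-1}$ is a one-line induction. Reading $U_{k_0-1}$ as the singleton consisting of the empty schedule, step~\ref{enumerate-step} always appends at least one $(\delta,m)$-partition of $n_k$ (for instance $(n_k,0,\dots,0)$) to each schedule it processes, so $U_k$ is non-empty at the end of step~\ref{enumerate-step}, and step~\ref{prune-step} only discards duplicates and hence keeps at least one representative. Below I therefore just assume $U_{k-1}\neq\emptyset$.

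For the matching partition, write $a_i=n_{i,k}(S)$ for $1\le i\le m$, so $\sum_i a_i=n_k$, and take the ``free'' machine to be $m^\ast=\argmax_i a_i$. For every $i\neq m^\ast$ put $b_i=0$ if $a_i=0$, and otherwise set $q_i=\ceiling{\log_{1+\delta}a_i}$ and $b_i=\floor{(1+\delta)^{q_i}}$; finally let $b_{m^\ast}=n_k-\sum_{i\neq m^\ast}b_i$. For $i\neq m^\ast$ with $a_i>0$ the choice of $q_i$ gives $(1+\delta)^{q_i-1}<a_i\le(1+\delta)^{q_i}$, hence (as $a_i$ is an integer) $(1+\delta)^{q_i-1}<a_i\le b_i\le(1+\delta)^{q_i}$; this yields both $\ceiling{\log_{1+\delta}b_i}=q_i=\ceiling{\log_{1+\delta}a_i}$ and $b_i\le(1+\delta)^{q_i}=(1+\delta)(1+\delta)^{q_i-1}<(1+\delta)a_i$. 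Summing over $i\neq m^\ast$, $\sum_{i\neq m^\ast}b_i\le(1+\delta)\sum_{i\neq m^\ast}a_i=(1+\delta)(n_k-a_{m^\ast})\le(1+\delta)(1-\tfrac1m)n_k$, which is at most $n_k$ as soon as $\delta\le\tfrac1{m-1}$; since $m$ is a constant and $\delta$ is only constrained by an upper bound, I may also assume $\delta\le\tfrac1{m-1}$. Hence $b_{m^\ast}\ge0$, the tuple $(b_1,\dots,b_m)$ consists of non-negative integers summing to $n_k$ with each of its $m-1$ entries outside $m^\ast$ equal to $0$ or $\floor{(1+\delta)^q}$, i.e.\ it is a $(\delta,m)$-partition of $n_k$; and the identity $\ceiling{\log_{1+\delta}b_i}=\ceiling{\log_{1+\delta}a_i}$ for $i\neq m^\ast$ (together with $b_i=0$ whenever $a_i=0$) is exactly condition~(2) of the definition of a $\delta_k$-close schedule. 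Combining with the first two paragraphs completes the proof.

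The only non-routine point — and thus the main obstacle — is the non-negativity $b_{m^\ast}\ge0$: rounding the $m-1$ non-free coordinates \emph{up} to the tops of their $(1+\delta)$-buckets must not overshoot $n_k$, which is exactly why the free machine is chosen to be the heaviest one in $S$ and why one wants $\delta$ small relative to $m$. Everything else is short bookkeeping: the degenerate cases $a_i=0$ (set $b_i=0$, explicitly allowed) and $a_i=1$ (then $q_i=0$ and $b_i=1$, the same bucket), and the elementary fact that $\floor{(1+\delta)^{q_i}}$ lands in $\bigl((1+\delta)^{q_i-1},(1+\delta)^{q_i}\bigr]$ once $\delta$ is below the algorithm's threshold. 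The base case $k=k_0$ is the same construction applied to the empty schedule.
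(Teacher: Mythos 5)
Your proof is correct in substance but takes a genuinely different construction from the paper's, and the difference surfaces in one place. The paper also argues by explicit construction, but it assigns the group-$k$ jobs to the machines sequentially in index order: each machine $M_i$ with $n_{i,k}(S)>0$ receives $\floor{(1+\delta)^{q_i}}$ jobs, where $(1+\delta)^{q_i-1}<n_{i,k}(S)\le(1+\delta)^{q_i}$, \emph{unless} fewer jobs remain unscheduled, in which case that machine takes all remaining jobs and every later machine takes none. Because each assignment is capped by the number of unscheduled jobs, the construction can never overshoot $n_k$; the single irregular coordinate is the machine where the jobs run out, so the result is a $(\delta,m)$-partition and a $\delta_k$-close schedule with no side condition on $\delta$ whatsoever. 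You instead round up every machine except the most loaded one $m^\ast$ and dump the remainder there, which is why you must verify $b_{m^\ast}\ge 0$, and that is exactly where your extra assumption $\delta\le\tfrac{1}{m-1}$ enters. That assumption is not supplied by the algorithm: step 1 only requires $\delta<\tfrac{\epsilon\alpha_0}{24(k_1-k_0+1)}$, which implies $\delta\le\tfrac{1}{m-1}$ for moderate $m$ (since $k_1-k_0+1\ge\log_{1+\tau}3$ forces $\delta$ to be quite small) but not for every constant $m$. So, strictly, you prove the lemma for a harmlessly modified choice of $\delta$ rather than for the algorithm as stated; since shrinking $\delta$ only tightens the inequalities in Lemma~\ref{lemma-close-approx-after-prune} and Lemma~\ref{lemma-close-approx} and changes the complexity bounds only by constants, nothing downstream breaks, but you should either state this modification explicitly or switch to the capped sequential assignment, which yields the lemma unconditionally. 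On the plus side, your explicit handling of the non-emptiness of $U_{k-1}$ and of the bucket identity $\ceiling{\log_{1+\delta}b_i}=\ceiling{\log_{1+\delta}a_i}$ (including the $a_i\in\{0,1\}$ edge cases) fills in details the paper leaves implicit.
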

\begin{proof}
%
The existence of $S_\delta$ can be shown by construction. 
We initialize $S_{\delta}$ to be any schedule from $U_{k-1}$. Then we schedule  the jobs from group $k$ to $M_i$, starting from $i=1$.  Suppose there are $n_{i,k}(S) > 0 $ jobs scheduled on $M_i$ in $S$, and  $  (1 +\delta)^{q-1} < n_{i, k}(S) \le (1 +\delta)^q$ for some integer $q \ge 0$,   then if there are less than $\floor{ (1 +\delta)^q}$ jobs unscheduled in this group, assign all the remaining jobs to machine $M_i$; otherwise, assign $\floor{ (1 +\delta)^q}$ jobs to machine $M_i$ and continue to schedule the jobs of this group to the next machine.

It is easy to see that the constructed schedule of jobs from group $k$ form a $(\delta, m)$-partition that would be  added to $U_k$ in step \ref{enumerate-step}. By definition, the constructed schedule $S_\delta$ is  a {\bf $\delta_k$-close schedule to $S$} . 
 \end{proof}

We now analyze step \ref{prune-step} of our algorithm where  ``similar'' schedules are pruned after a group of jobs in $N'_L$ are scheduled. 
We will need the following notations:

\rightskip=2em \leftskip=2em \noindent $\sigma_{i, k}(S)$:   the
total completion time of jobs from group $k$ that are
scheduled on $M_i$ in   $S$.

\noindent $\mathcal{T}_{i, k}(S)$: the largest completion time of the jobs from group $k$ that are scheduled on $M_i$ in $S$.

\noindent $P_{i, k}(S)$: the total processing time of jobs from group $k$ that are scheduled to machine $M_i$ in  $S$.

\rightskip=0em \leftskip=0em

For any optimal schedule  $S^*$ of the jobs in $N'_L$, let $S_k^*$ be the partial schedule of jobs from groups $k_0$ to $k$ with the processing time at most $rp_k$ in $S^*$. Our next lemma  shows that there is a schedule $S_k \in U_k$ that approximates the partial schedule $S_k^*$.

\begin{lemma}\label{lemma-close-approx-after-prune}
For any optimal schedule  $S^*$ of the jobs in $N'_L$, let $S_k^*$ be the
partial schedule in $S^*$ of the jobs from groups $k_0$ to $k$. Let $\mu = k_1 - k_0 +1 $, then after
some schedules in $U_k$ are pruned at step (\ref{prune-step}), there
exists a schedule $S_k \in U_k$  such that
\begin{enumerate}
    \item[(1)]   $P_i(S_k)  \le (1 + \delta)^{k-k_0+2} P_i(S^*_k)$ for  $1 \le i \le m$, and
    \item[(2)]
     $\sigma_i(S_k) \le (1 + \delta)^{k-k_0+1} (1+ \tfrac{ 2 \mu \delta}{\alpha_0})(1 + \tfrac{3\delta}{\alpha_0}) \sigma_i(S^*_k)$ for $1 \le i \le m$.
\end{enumerate}
\end{lemma}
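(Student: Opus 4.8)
The plan is to prove both statements by induction on $k$, handling the two groups of jobs at each step: first the "new" group $k$ that gets scheduled in step (\ref{enumerate-step}), and then the effect of the pruning in step (\ref{prune-step}). The base case $k = k_0 - 1$ is vacuous since $U_{k_0-1} = \emptyset$ and there are no jobs, so I would start the real argument at $k = k_0$ and phrase the inductive step generically.

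For the inductive step, I would proceed as follows. Fix the optimal schedule $S^*$ and its partial schedule $S^*_k$. By the induction hypothesis applied to $S^*_{k-1}$, there is a schedule $S_{k-1} \in U_{k-1}$ (\emph{after} pruning at level $k-1$) with $P_i(S_{k-1}) \le (1+\delta)^{k-1-k_0+2}P_i(S^*_{k-1})$ and $\sigma_i(S_{k-1}) \le (1+\delta)^{k-1-k_0+1}(1+\tfrac{2\mu\delta}{\alpha_0})(1+\tfrac{3\delta}{\alpha_0})\sigma_i(S^*_{k-1})$. Next, I would invoke Lemma~\ref{claim-delta-close-sumcj} with the target schedule $S$ taken to be the one obtained from $S_{k-1}$ by appending group-$k$ jobs exactly as $S^*$ does (i.e., $n_{i,k}(S) = n_{i,k}(S^*_k)$): this produces a $\delta_k$-close schedule $S_\delta \in U_k$ that was inserted during the Enumerate step, and by the remark after the definition of $\delta_k$-closeness, $n_{i,k}(S_\delta) \le (1+\delta)\,n_{i,k}(S^*_k)$ for every $i$. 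Now I need to bound $P_i(S_\delta)$ and $\sigma_i(S_\delta)$ against $P_i(S^*_k)$ and $\sigma_i(S^*_k)$. For the processing time this is easy: the start point of the group-$k$ jobs on $M_i$ in $S_\delta$ is $P_i(S_{k-1})$, which is at most $(1+\delta)^{k-k_0+1}P_i(S^*_{k-1}) \le (1+\delta)^{k-k_0+1}P_i(S^*_k)$, and the number of group-$k$ jobs grows by a factor at most $(1+\delta)$, so $P_i(S_\delta) \le (1+\delta)^{k-k_0+2}P_i(S^*_k)$. For the total completion time I would think of the group-$k$ jobs on $M_i$ as a block of identical jobs of processing time $rp_k$ and apply Lemma~\ref{claim-same-length-job}\ref{prune-claim-4}: going from $S^*_k$'s block to $S_\delta$'s block, the start time is multiplied by a factor $1+\delta''$ where $(1+\delta'')$ absorbs the relative increase of $P_i(S_{k-1})$ over $P_i(S^*_{k-1})$, and the count increases by $\lfloor x\delta'\rfloor$ with $\delta' = \delta$. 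Combined with the inductive bound on the contribution $\sigma_i(S_{k-1})$ of the earlier groups, this yields a bound on $\sigma_i(S_\delta)$ of the required shape, with the exponent of $(1+\delta)$ incremented by one and possibly one extra $(1+\delta''/\alpha_0)$ factor that must be folded into the $(1+\tfrac{2\mu\delta}{\alpha_0})$ term — here I would use $\mu = k_1-k_0+1$ as the number of levels, so that $\mu$ applications of a $(1+\tfrac{2\delta}{\alpha_0})$-type factor telescope into $(1+\tfrac{2\mu\delta}{\alpha_0})$ up to lower-order terms, using $\delta < \tfrac{\epsilon\alpha_0}{24\mu}$ to control the cross terms.

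Finally I would account for the pruning at step (\ref{prune-step}): when $S_\delta$ is discarded, it is replaced by some $S_k$ with $S_k \overset{\delta}{\approx} S_\delta$, meaning $P_i(S_k)$ and $P_i(S_\delta)$ lie in a common interval $[(1+\delta)^x,(1+\delta)^{x+1})$ and likewise for $\sigma_i$. Hence $P_i(S_k) \le (1+\delta)P_i(S_\delta)$ and $\sigma_i(S_k) \le (1+\delta)\sigma_i(S_\delta)$, and this one extra factor of $(1+\delta)$ is exactly what the exponents $k-k_0+2$ (resp. $k-k_0+1$) in the statement are set up to absorb relative to the $k-1$ case. Iterating the "discard and replace" step is safe because the pruning terminates and each replacement costs only a single $(1+\delta)$ factor, so one can always reach a survivor in $U_k$. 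The main obstacle I anticipate is the bookkeeping in part (2): one must be careful that the multiplicative error factors from (i) the $\delta_k$-closeness count blow-up, (ii) the start-time shift via Lemma~\ref{claim-same-length-job}, and (iii) the pruning step do not compound into something worse than the claimed $(1+\delta)^{k-k_0+1}(1+\tfrac{2\mu\delta}{\alpha_0})(1+\tfrac{3\delta}{\alpha_0})$ — in particular the $(1+\tfrac{2\mu\delta}{\alpha_0})$ factor has to stay fixed (not grow with $k$) across all $\mu$ levels, which forces a slightly delicate choice of how much error is charged to the geometric $(1+\delta)^{\cdots}$ part versus the flat part. Keeping the "shift" error inside the $(1+\delta)$ exponents and the "extra count" error inside the flat factors, and using $\delta < \tfrac{\epsilon\alpha_0}{24\mu}$ to bound $(1+\tfrac{2\delta}{\alpha_0})^\mu \le 1+\tfrac{2\mu\delta}{\alpha_0}\cdot(1+o(1))$, is how I would make this go through cleanly.
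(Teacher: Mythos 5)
Your skeleton matches the paper's proof: induction on $k$, use Lemma~\ref{claim-delta-close-sumcj} to find in $U_k$ an enumerated schedule that extends the surviving schedule $S_{k-1}$ and is $\delta_k$-close to $S^*$ on group $k$, bound the new group via Lemma~\ref{claim-same-length-job}\ref{prune-claim-4}, add the induction-hypothesis bound for the earlier groups, and pay one factor $(1+\delta)$ for pruning. However, the place where you yourself locate the difficulty --- why the flat factor $(1+\tfrac{2\mu\delta}{\alpha_0})$ stays fixed over all $\mu$ levels --- is resolved incorrectly in your write-up, and this is a genuine gap. You propose to keep the ``shift'' error inside the $(1+\delta)$ exponents and to let ``$\mu$ applications of a $(1+\tfrac{2\delta}{\alpha_0})$-type factor telescope into $(1+\tfrac{2\mu\delta}{\alpha_0})$ up to lower-order terms.'' Neither half works. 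First, the relative delay of the start of group $k$ on $M_i$ is $\delta''\approx(1+\delta)^{k-k_0+1}-1$ and enters through Lemma~\ref{claim-same-length-job} as $(1+\tfrac{\delta''}{\alpha_0})\ge(1+\delta)^{k-k_0+1}$ (since $\alpha_0\le 1$), so it cannot be absorbed by the exponent, whose budget grows by only one power of $(1+\delta)$ per level --- and that power is already consumed by the pruning step. Second, the per-level shift factor is not of size $(1+\tfrac{2\delta}{\alpha_0})$: it is as large as $(1+\tfrac{((1+\delta)^{\mu}-1)}{\alpha_0})$, because it reflects the \emph{accumulated} processing-time discrepancy of all earlier groups; if such factors were compounded multiplicatively across levels onto the whole bound, the flat factor would grow with $k$ (roughly like $\prod_j(1+\tfrac{2j\delta}{\alpha_0})$), and even the milder $(1+\tfrac{2\delta}{\alpha_0})^{\mu}$ strictly exceeds $1+\tfrac{2\mu\delta}{\alpha_0}$, so the induction with the exact constant in the statement would not close.

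The missing idea, which is how the paper makes the factor non-accumulating, is that the enumerated schedule $S^{\delta}_{k}$ is \emph{identical} to $S_{k-1}$ on groups $k_0,\dots,k-1$, so $\sigma_i(S_{k-1})$ is carried over verbatim with its IH bound, and the large shift factor is charged exactly once, and only to the new group's own contribution $\sigma_{i,k}(S^*_k)$: one bounds $\delta''\le(1+\delta)^{\mu}-1\le 2\mu\delta$ uniformly in $k$, so the new group satisfies $\sigma_{i,k}(S^{\delta}_{k})\le(1+\tfrac{2\mu\delta}{\alpha_0})(1+\tfrac{3\delta}{\alpha_0})\sigma_{i,k}(S^*_k)$, which is already of the inductive form since $(1+\delta)^{k-k_0}\ge 1$; adding the two pieces and then paying the single pruning factor $(1+\delta)$ gives exactly the claimed exponent $k-k_0+1$. (A smaller bookkeeping slip: you claim $P_i(S_{\delta})\le(1+\delta)^{k-k_0+2}P_i(S^*_k)$ \emph{before} pruning; the pre-pruning bound must be $(1+\delta)^{k-k_0+1}$, otherwise the pruning factor pushes you to $(1+\delta)^{k-k_0+3}$, beyond part (1).) With the charging scheme corrected as above, your argument becomes the paper's proof.
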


\begin{proof}
We prove by induction on $k$. First consider $k=k_0$. By Lemma~\ref{claim-delta-close-sumcj}, at the end of  step \ref{enumerate-step}  there is a schedule $S^{\delta}_{k_0} \in U_{k_0}$ that is $\delta_{k_0}$-close to $S^*_{k_0}$, and $n_{i,k_0}(S^{\delta}_{k_0}) \le (1 + \delta) n_{i,k_0}(S^*_{k_0})$ for all $i$,  $1 \le i \le m$ which implies $p_{i,k_0}(S^{\delta}_{k_0}) \le (1 + \delta) p_{i,k_0}(S^*_{k_0})$. In both schedules $S^{\delta}_{k_0}$ and  $S^*_{k_0}$, the jobs are scheduled from time 0 on each machine, by
Lemma~\ref{claim-same-length-job} Case (3), for each machine $M_i$, we have
$\sigma_i(S^{\delta}_{k_0}) \le  ( 1 + \tfrac{3\delta}{\alpha_0})
\sigma_i(S^*_{k_0})$. If $S^{\delta}_{k_0}$ is pruned from $U_{k_0}$ at step \ref{prune-step}, then there must be   a schedule $S_{k_0}$ such that
$S_{k_0} \in U_{k_0}$ and $S^{\delta}_{k_0} \overset{\delta}\approx S_{k_0}$, so
for each machine $M_i$,  $1 \le i \le m $, we have 
$$P_i(S_{k_0}) \le (1 + \delta) P_i(S^{\delta}_{k_0}) \le (1 + \delta)^2 P_i(S^*_{k_0}) = (1 + \delta)^{k-k_0+2} P_i(S^*_{k_0})$$ and
$$\sigma_i(S_{k_0}) \le (1 + \delta)\sigma_i(S^{\delta}_{k_0}) \le
(1+\delta) (1+ \tfrac{ 3 \delta}{\alpha_0}) = (1+\delta)^{k-k_0+1} (1+ \tfrac{ 3 \delta}{\alpha_0}) \sigma_i(S^*_{k_0}).$$

Assume the induction hypothesis holds for some $k \ge k_0$. So after schedules in $U_k$ are pruned,   there is a schedule  $S_k$ in $U_k$ that satisfies the inequalities (1) and (2). Now by the way that we construct schedules and by Lemma~\ref{claim-delta-close-sumcj}, 
in $U_{k+1}$,  there must be  a  schedule $S_{k+1}^{\delta} \in U_{k+1}$ that
is the same as $S_k$ for the jobs from groups $k_0$ to $k$,  and is
$\delta_{k+1}$-close to  $S_{k+1}^*$ for the jobs of group $(k+1)$.

Then for each machine $M_i$ we have 
\begin{align*} 
P_i(S^{\delta}_{k+1}) & = 
             P_i(S_k)+P_{i, k+1}(S^{\delta}_{k+1}) \\
& \le (1 + \delta)^{k-k_0+2}
P_i(S^*_k) + (1+\delta)P_{i, k+1}(S^*_{k+1}) \\
& \le (1 + \delta)^{k-k_0+2} P_i(S^*_{k+1})
\end{align*} 
And compared with $S^*_{k+1}$, on each machine
$M_i$, the first job from group $(k+1)$  group in $S^{\delta}_{k+1}$ is
delayed by at most $\frac{P_i(S_k)-P_i(S^*_{k})}{\alpha_0} \le
\frac{(1+\delta)^{k-k_0+2}-1}{\alpha_0}\cdot P_i(S^*_{k})$. By Lemma
\ref{claim-same-length-job} Case (4), for the jobs from group $k+1$ on
each machine $M_i$, we have $\sigma_{i, k+1}(S^{\delta}_{k+1}) \le
(1+\frac{(1+\delta)^{k-k_0+2}-1}{\alpha_0})( 1 +
\tfrac{3\delta}{\alpha_0}) \sigma_{i, k+1}(S^*_{k+1})$ and

     \begin{eqnarray*}
     & &  \sigma_{i}(S^{\delta}_{k+1}) \\
      &  = &
      \sigma_{i}(S_{k}) +  \sigma_{i, k+1}(S^{\delta}_{k+1}) \\
      &  \le &
        \sigma_{i}(S_{k}) +  (1+\tfrac{(1+\delta)^{k-k_0+2}-1}{\alpha_0})( 1 + \tfrac{3\delta}{\alpha_0}) \cdot \sigma_{i, k+1}(S^*_{k+1}) \\
      &  \le &
        \sigma_{i}(S_{k}) +  (1+\tfrac{(1+\delta)^{\mu}-1}{\alpha_0})( 1 + \tfrac{3\delta}{\alpha_0}) \cdot \sigma_{i, k+1}(S^*_{k+1}) \\
        &  \le &
         \sigma_{i}(S_{k}) +  (1+\tfrac{2\mu\delta}{\alpha_0})( 1 + \tfrac{3\delta}{\alpha_0}) \cdot \sigma_{i, k+1}(S^*_{k+1}) \\
         & \le & (1 + \delta)^{k-k_0+1} (1+ \tfrac{ 2 \mu \delta}{\alpha_0})(1 + \tfrac{3\delta}{\alpha_0})   (\sigma_i(S^*_k) +  (1+\tfrac{2\mu\delta}{\alpha_0})( 1 + \tfrac{3\delta}{\alpha_0}) \cdot \sigma_{i, k+1}(S^*_{k+1})\\
         &  \le &
        (1 + \delta)^{k-k_0+1} (1+ \tfrac{ 2 \mu \delta}{\alpha_0})(1 + \tfrac{3\delta}{\alpha_0}) \cdot (\sigma_i(S^*_k) +  \sigma_{i, k+1}(S^*_{k+1})) \\
      &  \le &
        (1 + \delta)^{k-k_0+1} (1+ \tfrac{ 2 \mu \delta}{\alpha_0})(1 + \tfrac{3\delta}{\alpha_0}) \cdot \sigma_i(S^*_{k+1})
     \end{eqnarray*}

     Then after the ``similar'' schedules are pruned in our procedure, there is a schedule $S_{k+1}$ that is ``similar'' to $S^{\delta}_{k+1}$, so for each machines $M_i (1 \le i \le m)$ we have 
     $$P_i(S_{k+1}) \le (1 + \delta) P_i(S^{\delta}_{k+1}) \le (1 + \delta)^{(k+1)-k_0+2} P_i(S^*_{k+1})$$ and $$\sigma_i(S_{k+1}) \le (1 + \delta)\sigma_i(S^{\delta}_{k+1}) \le (1 + \delta)^{(k+1)-k_0+1} (1+ \tfrac{ 2 \mu \delta}{\alpha_0})(1 + \tfrac{3\delta}{\alpha_0}) \sigma_i(S^*_{k+1}).$$
     This completes the proof.
\end{proof}

After all groups of jobs are scheduled, our algorithm finds the schedule $S'$ that has the smallest total completion time among all generated schedules, and then returns the value   $(1 + {\epsilon}/{3}) (1+{\epsilon}/{15})\sigma(S')$. 
In the following we will show that the returned value is an approximate value of the optimal total completion time for the  job set $N$.

\begin{lemma}\label{lemma-close-approx}
Let $S^*$ be  the optimal schedule for jobs in $N$,  $(1 + {\epsilon}/{3} ) (1+\epsilon /15) \sigma (S') \le   (1 + \epsilon)  \sigma(S^*) $.
\end{lemma}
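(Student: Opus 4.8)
The plan is to decompose the approximation error into the two sources the algorithm introduces --- the enumerate/prune step of Stage~2, and the rounding of large jobs together with the discarding of small jobs in Stage~1 --- bound each in isolation, and then close the argument with a (somewhat delicate) bound on the resulting product of constants. Throughout write $\mathrm{OPT}(N'_L)$ for the minimum total completion time of the rounded instance $N'_L$, recall $\mu=k_1-k_0+1$, $\tau=\tfrac{\epsilon\alpha_0}{15}$, and that Stage~2 chooses $\delta<\tfrac{\epsilon\alpha_0}{24\mu}$. I will establish (i) $\sigma(S')\le F\cdot\mathrm{OPT}(N'_L)$ with $F=(1+\delta)^{\mu}\bigl(1+\tfrac{2\mu\delta}{\alpha_0}\bigr)\bigl(1+\tfrac{3\delta}{\alpha_0}\bigr)$, and (ii) $\mathrm{OPT}(N'_L)\le\bigl(1+\tfrac{\epsilon}{15}\bigr)\sigma(S^*)$, and then combine them.

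For (i) I apply Lemma~\ref{lemma-close-approx-after-prune} with $k=k_1$ to an optimal schedule of $N'_L$. Since for $k=k_1$ the partial schedule $S^*_{k_1}$ is the entire optimal schedule of $N'_L$, the lemma yields some $S_{k_1}\in U_{k_1}$ with $\sigma_i(S_{k_1})\le F\,\sigma_i(S^*_{k_1})$ for every machine $M_i$; summing over $i$ and using that $S'$ is the schedule in $U_{k_1}$ of least total completion time gives $\sigma(S')\le\sigma(S_{k_1})\le F\cdot\mathrm{OPT}(N'_L)$. To bound $F$, note $\mu\delta<\tfrac{\epsilon\alpha_0}{24}\le\tfrac{\epsilon}{24}$, so by $1+x\le e^x$, $\alpha_0\le1$, $\epsilon\le1$ and $\mu\ge1$ we get $(1+\delta)^{\mu}\le e^{\mu\delta}\le e^{\epsilon/24}$, $1+\tfrac{2\mu\delta}{\alpha_0}<1+\tfrac{\epsilon}{12}$, and $1+\tfrac{3\delta}{\alpha_0}<1+\tfrac{\epsilon}{8}$; in particular $F\le1+\tfrac{\epsilon}{3}$, but I keep the sharper estimate $F<e^{\epsilon/24}\bigl(1+\tfrac{\epsilon}{12}\bigr)\bigl(1+\tfrac{\epsilon}{8}\bigr)$.

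For (ii) I transform the optimal schedule $S^*$ of $N$ into a schedule of $N'_L$: on each machine discard the small jobs, keep the machine assignment and the relative processing order of the large jobs, replace each large job $j$ by its rounded copy (of length $rp_{k(j)}\le(1+\tau)p_j$), and pack these rounded jobs contiguously from time $0$. Let $g_i(x)=\min\{t:\int_0^t c_i=x\}$, where $c_i$ is the capacity profile of $M_i$ (with values in $[\alpha_0,1]$). If $j_1,\dots,j_r$ are the large jobs processed on $M_i$ in $S^*$ up to and including $j=j_r$, then $M_i$ has done at least $\Pi_r:=\sum_{s\le r}p_{j_s}$ units of work by time $C_j(S^*)$, so $g_i(\Pi_r)\le C_j(S^*)$, while also $\Pi_r\le C_j(S^*)$; hence in the new schedule $j$ finishes at $g_i\bigl(\sum_{s\le r}rp_{k(j_s)}\bigr)\le g_i\bigl((1+\tau)\Pi_r\bigr)\le g_i(\Pi_r)+\tfrac{\tau\Pi_r}{\alpha_0}\le\bigl(1+\tfrac{\tau}{\alpha_0}\bigr)C_j(S^*)$. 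Summing over all large jobs, using $\sum_{j\in N_L}C_j(S^*)\le\sigma(S^*)$ and $\tfrac{\tau}{\alpha_0}=\tfrac{\epsilon}{15}$, produces a feasible schedule of $N'_L$ of value at most $\bigl(1+\tfrac{\epsilon}{15}\bigr)\sigma(S^*)$, so $\mathrm{OPT}(N'_L)\le\bigl(1+\tfrac{\epsilon}{15}\bigr)\sigma(S^*)$.

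Combining (i) and (ii), $\sigma(S')\le F\bigl(1+\tfrac{\epsilon}{15}\bigr)\sigma(S^*)$, hence $\bigl(1+\tfrac{\epsilon}{3}\bigr)\bigl(1+\tfrac{\epsilon}{15}\bigr)\sigma(S')\le\bigl(1+\tfrac{\epsilon}{3}\bigr)\bigl(1+\tfrac{\epsilon}{15}\bigr)^2F\,\sigma(S^*)$, and it remains to check $\bigl(1+\tfrac{\epsilon}{3}\bigr)\bigl(1+\tfrac{\epsilon}{15}\bigr)^2F\le1+\epsilon$ for all $\epsilon\in(0,1]$. I expect this last inequality to be the real obstacle: the coarse bound $F\le1+\tfrac{\epsilon}{3}$ is not quite sufficient (it overshoots slightly near $\epsilon=1$), so one must substitute the sharper estimate of $F$; with that, the left side is a product of positive, increasing, convex functions of $\epsilon$ --- hence convex --- equal to $1$ at $\epsilon=0$ and, by a short computation, below $2$ at $\epsilon=1$, so on $[0,1]$ it lies under its chord and is therefore $\le1+\epsilon$. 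The only other step needing care is justifying $g_i\bigl((1+\tau)\Pi_r\bigr)\le g_i(\Pi_r)+\tfrac{\tau\Pi_r}{\alpha_0}$ despite the idle periods and interleaved small jobs in $S^*$, which the cumulative-capacity formulation is designed to handle cleanly.
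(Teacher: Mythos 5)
Your proof is correct, and at one point it is more careful than the paper's own argument. Both proofs run on the same engine: Lemma~\ref{lemma-close-approx-after-prune} applied at $k=k_1$ gives a schedule in $U_{k_1}$ within the factor $F=(1+\delta)^{\mu}\bigl(1+\tfrac{2\mu\delta}{\alpha_0}\bigr)\bigl(1+\tfrac{3\delta}{\alpha_0}\bigr)$ of an optimal arrangement of the rounded large jobs, followed by an accounting of constants using $\delta<\tfrac{\epsilon\alpha_0}{24\mu}$. The paper, however, reaches $\sigma(S^*)$ by first building a full schedule $S$ of $N$ from $S'$ (un-rounding the large jobs and inserting the small jobs on $M_1$, giving $\sigma(S)\le(1+\tfrac{\epsilon}{3})\sigma(S')$) and then passing from $\sigma(S^*_{k_1})$ to $\sigma(S^*)$ directly inside the displayed chain; since $S^*$ schedules the original, unrounded jobs while $U_{k_1}$ schedules rounded ones, the comparison between the sketch instance and $S^*$ --- your step (ii) --- is left implicit there, and the extra $(1+\epsilon/15)$ on the left-hand side of the lemma is not visibly tracked. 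Your decomposition $\sigma(S')\le F\cdot\mathrm{OPT}(N'_L)$ together with $\mathrm{OPT}(N'_L)\le(1+\tfrac{\epsilon}{15})\sigma(S^*)$, proved by the cumulative-capacity repacking argument (keep assignment and order, drop small jobs, round up, left-pack, and use $g_i(\Pi_r)\le C_j(S^*)$ and $\Pi_r\le C_j(S^*)$), supplies exactly this missing bridge, and you then honestly absorb the resulting $(1+\epsilon/15)^2$ in the final constant check; the convexity argument (value $1$ at $\epsilon=0$, roughly $1.93<2$ at $\epsilon=1$, product of positive increasing convex factors) is a clean way to close it, and you are right that the coarse bound $F\le1+\tfrac{\epsilon}{3}$ would overshoot near $\epsilon=1$. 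What the paper's route buys in exchange is the explicit schedule $S$ of all of $N$, which is what certifies that the returned value is at least $\sigma(S^*)$ and underlies the two-pass construction, but it is not needed for the inequality stated in this lemma.
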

\begin{proof}
We first construct a schedule of jobs in $N$ based on the schedule $S'$ of jobs in the sketch $N'_L$ using the following two steps:   
\begin{enumerate}
\item Replace each job in $S'$ with the corresponding job from $N$, let the schedule be $S''$
\item Insert all small jobs from $N \setminus N_L$ to $M_1$ starting at time 0 to $S''$, and let the schedule be $S$
\end{enumerate}

For each job $j$ with processing time $(1+\tau)^{k-1} \le p_j < (1+\tau)^k$, its rounded processing time is $rp_k=\floor{(1+\tau)^k} \ge p_j$, so when we replace $rp_k$ with $p_j$ to get $S''$,   the completion time of each job will not increase, and thus the total completion time of jobs in $S''$ is at most that of $S'$. That is, $\sigma(S'') \le \sigma(S')$.

 All the small jobs  have processing time at most  $ \tfrac{\epsilon \alpha_0}{3n^2} p_{max}$, so the total length is at most  $n \cdot  \tfrac{\epsilon \alpha_0}{3n^2} p_{max} $. Inserting them to $M_1$, the completion time of the last small job  in $S$ is at most $n \cdot p_{max} \tfrac{\epsilon \alpha_0}{3n^2} \cdot \tfrac{1}{\alpha_0}$, and the other jobs' completion time is increased by at most $n \cdot \tfrac{\epsilon \alpha_0}{3n^2} p_{max} \cdot \tfrac{1}{\alpha_0}$. The total completion time of all the jobs is at most
$$\sigma(S) \le  \sigma(S'') + n^2 \cdot  \tfrac{\epsilon \alpha_0}{3n^2} p_{max} \cdot \tfrac{1}{\alpha_0}   \le  \sigma(S'') + \tfrac{\epsilon}{3} \sigma(S'') \le (1 + \tfrac{\epsilon}{3})\sigma(S'') \le (1 + \tfrac{\epsilon}{3}) \sigma(S') .$$ 

By Lemma~\ref{lemma-close-approx-after-prune},  there is a schedule $S_{k_1} 
 \in U_{k_1}$  that corresponds to the schedule of the large jobs obtained from $S^*$.  Furthermore, 
 $\sigma(S_{{k_1}}) \le (1 + \delta)^{\mu} (1+ \tfrac{ 2 \mu \delta}{\alpha_0})(1 + \tfrac{3\delta}{\alpha_0}) \sigma(S^*_{k_1})$ where $\mu=k_1-k_0+1 = \log_{1 + \tau} \tfrac{3 n^2}{\epsilon \alpha_0}$.  For  schedule $S'$,  we have $\sigma(S') \le \sigma(S_{{k_1}})$. Thus,
\begin{eqnarray*}
      \sigma(S) &  \le &
     (1 + \tfrac{\epsilon}{3})\sigma(S')
     \\
      &  \le &
       (1 + \tfrac{\epsilon}{3})(1 + \delta)^{\mu} (1+ \tfrac{ 2 \mu \delta}{\alpha_0})(1 + \tfrac{3\delta}{\alpha_0}) \sigma(S^*_{k_1}) \\
        &  \le &
        (1 + \tfrac{\epsilon}{3})(1+ \tfrac{ 2 \mu \delta}{\alpha_0}) (1+ \tfrac{ 2 \mu \delta}{\alpha_0})(1 + \tfrac{3\delta}{\alpha_0}) \sigma(S^*) \\
         &  \le &
        (1 + \tfrac{\epsilon}{3})(1+ \tfrac{\epsilon}{12})^2(1 + \tfrac{3\epsilon}{24\mu}) \sigma(S^*)    \hspace {0.2in}  \text{plug in }\delta = \tfrac{\epsilon\alpha_0}{  24 \mu} \text{ and } \mu = \log_{1 + \tau} \tfrac{3 n^2}{\epsilon \alpha_0}\\
       &  \le &
        (1 + \tfrac{\epsilon}{3})(1+ \tfrac{\epsilon}{12})^2(1 + \tfrac{\epsilon}{8 }) \sigma(S^*) \\
      &  \le &
        (1 + \epsilon)) \cdot \sigma(S^*)
     \end{eqnarray*}
\end{proof}

\begin{lemma}\label{lemma-delta-close-approx-time-space} The Algorithm   in stage 2 runs in 
$$O( (k_1-k_0 + 1) (m{\log_{1+\delta} n } )^{(m-1)} ({\log_{1+\delta} \tfrac{\sum p_j}{\alpha_0} } )^{m}({\log_{1+\delta} (n \tfrac{\sum p_j}{\alpha_0}) } )^{m}) $$ 
time using  
$O(({\log_{1+\delta}  \tfrac{\sum p_j}{\alpha_0}  } )^{m}({\log_{1+\delta} (n \tfrac{\sum p_j}{\alpha_0}) } )^{m})) $ space.
\end{lemma}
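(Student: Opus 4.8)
The plan is to bound the running time and space of the Stage-2 algorithm by tracking, group by group, the size of the set $U_k$ after pruning together with the cost of producing it. The key observation is that after the Prune operation at step (\ref{prune-step}), the surviving schedules in $U_k$ are pairwise non-similar, so each one is the unique representative of a distinct ``bucket'' indexed by the integer vector $(x_1,\dots,x_m,y_1,\dots,y_m)$, where $P_i(S)\in[(1+\delta)^{x_i},(1+\delta)^{x_i+1})$ and $\sigma_i(S)\in[(1+\delta)^{y_i},(1+\delta)^{y_i+1})$. Hence $|U_k|$ is at most the number of such vectors that can actually occur. First I would give the ranges: on any machine the total processing time of the scheduled jobs is between $0$ and $\sum p_j$, and since capacity is at least $\alpha_0$ the total completion time $\sigma_i$ is between $0$ and roughly $n\cdot\tfrac{\sum p_j}{\alpha_0}$ (each of at most $n$ jobs finishing by time $\tfrac{\sum p_j}{\alpha_0}$). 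Therefore the number of possible $x_i$ values is $O(\log_{1+\delta}\tfrac{\sum p_j}{\alpha_0})$ and the number of possible $y_i$ values is $O(\log_{1+\delta}(n\tfrac{\sum p_j}{\alpha_0}))$, giving $|U_k| = O((\log_{1+\delta}\tfrac{\sum p_j}{\alpha_0})^m(\log_{1+\delta}(n\tfrac{\sum p_j}{\alpha_0}))^m)$ after pruning. Keeping only this pruned set (and the current working set) gives the claimed space bound, modulo the observation that each schedule is stored as $O(m)$ numbers, which is absorbed.

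Next I would count the work done for a single group $k$. The Enumerate step (\ref{enumerate-step}) iterates over all $(\delta,m)$-partitions of $n_k$; by the definition of a $(\delta,m)$-partition, $m-1$ of the coordinates are either $0$ or of the form $\floor{(1+\delta)^q}$ with $0\le \floor{(1+\delta)^q}\le n_k\le n$, so there are $O((\log_{1+\delta} n)^{m-1})$ choices for those $m-1$ coordinates and the last coordinate is then determined; hence the number of partitions is $O((m\log_{1+\delta} n)^{m-1})$ (the factor $m^{m-1}$ accounting for which $m-1$ of the $m$ machines carry the ``structured'' loads). For each partition we extend every schedule in the incoming set $U_{k-1}$ — whose size is bounded as above — in $O(m)$ time, so the Enumerate step produces $U_k$ in time $O((m\log_{1+\delta} n)^{m-1}\cdot |U_{k-1}|)$. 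The Prune step then repeatedly discards similar schedules; implemented by bucketing on the $2m$-tuple above it costs time linear in the size of the unpruned $U_k$, which is the same order. Multiplying by the number of groups $k_1-k_0+1$ and collecting the bounds on $|U_{k-1}|$ yields exactly the stated time bound. Finally, step 4 (finding the minimum-$\sigma$ schedule in $U_{k_1}$) and step 5 take time $O(m|U_{k_1}|)$, which is dominated.

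The one point that needs a little care — and is the main thing to get right rather than a genuine obstacle — is justifying the ranges used for $P_i$ and $\sigma_i$ uniformly across all groups, i.e. that at no intermediate stage can a surviving schedule have a load or a partial total completion time outside $[0,\sum p_j]$ and $[0,\,n\tfrac{\sum p_j}{\alpha_0}]$ respectively; this follows because every schedule ever placed in any $U_k$ schedules a subset of the (rounded) jobs of $N'_L$, whose total processing time is at most $\sum p_j$ (rounding only inflates each $p_j$ by a factor less than $1+\tau$, which one should note is harmless, or alternatively bound by $(1+\tau)\sum p_j = O(\sum p_j)$), and each such job completes by time $\tfrac{\sum p_j}{\alpha_0}$ on its machine. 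A second minor point is that $\delta<1$, so $\log(1+\delta)=\Theta(\delta)$ and all the $\log_{1+\delta}$ factors are $\Theta(\tfrac1\delta\log(\cdot))$; I would state the bound in the $\log_{1+\delta}$ form as the lemma does and leave the translation implicit. With these in hand the proof is a matter of assembling the per-group time $O((m\log_{1+\delta}n)^{m-1}|U_{k-1}|)$ and the per-group space $O(|U_k|)$ over $k_0\le k\le k_1$.
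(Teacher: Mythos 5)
Your proposal is correct and follows essentially the same argument as the paper: count the $(\delta,m)$-partitions per group as $O(m(\log_{1+\delta}n)^{m-1})$, bound the number of surviving (pairwise non-similar) schedules by bucketing the $2m$-tuple of $P_i$ and $\sigma_i$ values with $P_i$ at most $\sum p_j$, completion times at most $\sum p_j/\alpha_0$, and $\sigma_i$ at most $n\sum p_j/\alpha_0$, and multiply over the $k_1-k_0+1$ groups. Your additional remarks (handling the $(1+\tau)$ rounding inflation, implementing Prune by bucketing so it is linear rather than quadratic in the unpruned set) only make explicit details the paper leaves implicit.
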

\begin{proof}
For each rounded processing time $rp_k$ and for each of $(m-1)$ machines, the number of possible jobs assigned  is either $0$, or $\floor{(1 + \delta)^l}$, $1 \le l \le {\log_{1+\delta} n_k }$, so there are $ O({\log_{1+\delta} n })$ values. The remaining jobs will be assigned to the last machine. There are at most $O( m ({\log_{1+\delta} n }   )^{m-1})$ ways to assign the jobs of a group to the $m$ machines.

For each schedule in $U_k$, the largest completion time  is  bounded by
$ L = \sum_{1   \le j \le n} p_j / \alpha_0$, and its total completion time is bounded by $ n L$.
Since we only keep non-similar schedules, there are at most $O((\log_{1+\delta}{L})^{m}(\log_{1+\delta}{n L})^{m})$ schedules.
\end{proof}

Combining Lemma~\ref{lemma:computing-sketch}, Lemma~\ref{lemma-close-approx}, and Lemma~\ref{lemma-delta-close-approx-time-space}, we get the following theorem.

\begin{theorem}\label{approximation-scheme-streaming-thm2}Let $\alpha_0$ and $\epsilon$ be a real in $(0,1]$. For $P_m, \alpha_{i,k} \ge \alpha_0 \mid stream \mid \sum C_j$,  there is  a 
one-pass $(1+\epsilon)$-approximation scheme with the 
following time and space complexity:
    \begin{enumerate}
        \item Given both an upper bound $n'$ for the number of jobs $n$  and a lower bound $p'_{max}$ for the largest processing time of job $p_{max}$ such that   $n\le n'\le c_1 n$ and   $1\le p_{max}'\le p_{max}\le  c_2p_{max}'$ for some $c_1$ and $c_2$,
        then it takes $O(1)$  update time and
         $O(\tfrac{1}{ \epsilon\alpha_0}\min (\log n+\log \tfrac {c_1c_2} {\epsilon\alpha_0},\log p_{max}+\log c_2))$ space to process each job from the stream. 
      \item Given only a lower bound $p'_{max}$ for $p_{max}$ where  $1\le p_{max}'\le p_{max}\le  c_2p_{max}'$,  then it  takes  $O(1)$  update time  and 
         $O(\tfrac{1}{\epsilon\alpha_0}(\log p_{max}+\log c_2))$ space to process each job in the stream.
    \item Given only  an upper bound $n'$ for $n$ such that   $n\le n'\le c_1 n$,  then   it  takes  $O(\log (\tfrac{1}{ \epsilon\alpha_0}) + \min( \log (\log n+\log \tfrac{c_1}{  \epsilon\alpha_0})), \log \log p_{max})$  update time, and 
         $O(\tfrac{1}{\epsilon\alpha_0}\min(\log n+\log \tfrac{c_1}{  \epsilon\alpha_0},\log p_{max}))$ space to process each job in the stream.
     \item Given no information about $n$ and $p_{max}$, then it takes  $O(\log (\tfrac{1}{ \epsilon\alpha_0} + \log \log p_{max}))$  update time, and      $O(\tfrac{1}{ \epsilon\alpha_0}\log p_{max})$ space to process each job in the stream.
        \item After processing the input stream, to compute the approximation value, it  takes 
$$O({\epsilon\alpha_0} \cdot (\tfrac{360}{\epsilon^2\alpha_0^2} \log \tfrac{n}{\epsilon\alpha_0})^{3m} \cdot (m \log n)^{m-1} \cdot (\log \tfrac{\sum p_j}{\alpha_0} \cdot \log \tfrac{n\sum p_j}{\alpha_0})^m)$$
time using  
$O((\log \tfrac{n}{\epsilon \alpha_0})^{2m} ({\log ( \tfrac{\sum p_j}{\alpha_0})  \log (n \tfrac{\sum p_j}{\alpha_0}) } )^{m})$ .
    \end{enumerate}
\end{theorem}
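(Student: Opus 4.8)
The plan is to assemble the theorem from the three lemmas already proved, one per stage, together with a short sandwich argument for the correctness of the returned value. The whole procedure makes exactly one pass over the stream: Stage~1 reads the input once to build the sketch $N'_L$, and Stage~2 operates only on $N'_L$. Hence parts (1)--(4), which concern only the resources spent while scanning the stream, are immediate restatements of the four corresponding cases of Lemma~\ref{lemma:computing-sketch}; I would simply cite it, matching each information regime (both $n'$ and $p'_{max}$ known, only $p'_{max}$, only $n'$, neither) to the same item there.

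For the approximation ratio, let $S^*$ be an optimal schedule for $N$ and let $v=(1+\epsilon/3)(1+\epsilon/15)\sigma(S')$ be the returned value; I would prove $\sigma(S^*)\le v\le(1+\epsilon)\sigma(S^*)$. The right inequality is exactly Lemma~\ref{lemma-close-approx}. For the left inequality I would reuse the schedule $S$ of $N$ constructed in the proof of Lemma~\ref{lemma-close-approx}: replacing each rounded job of $S'$ by its original job only decreases completion times because $rp_k\ge p_j$, and appending the small jobs to $M_1$ yields a feasible schedule with $\sigma(S)\le(1+\epsilon/3)\sigma(S')\le v$; since $S$ is feasible and $S^*$ optimal, $\sigma(S^*)\le\sigma(S)\le v$. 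Thus $v$ both over- and under-estimates the optimum within a $(1+\epsilon)$ factor.

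For part (5) I would start from the bounds of Lemma~\ref{lemma-delta-close-approx-time-space} and substitute the parameter choices $\tau=\epsilon\alpha_0/15$ and $\delta<\epsilon\alpha_0/(24\mu)$, with $\mu=k_1-k_0+1$. Two elementary estimates drive the computation: for $\delta\in(0,1]$ one has $\log_{1+\delta}x=\Theta(\tfrac1\delta\log x)$ (from $\delta\ln 2\le\ln(1+\delta)\le\delta$), and the definitions of $k_0,k_1$ give $\mu=\Theta(\tfrac1\tau\log\tfrac{n}{\epsilon\alpha_0})=\Theta(\tfrac1{\epsilon\alpha_0}\log\tfrac{n}{\epsilon\alpha_0})$, so $\tfrac1\delta=\tfrac{24\mu}{\epsilon\alpha_0}=\Theta(\tfrac1{\epsilon^2\alpha_0^2}\log\tfrac{n}{\epsilon\alpha_0})$ with leading constant $24\cdot 15=360$. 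In the time bound of Lemma~\ref{lemma-delta-close-approx-time-space} the three $\log_{1+\delta}$-factors (one to the power $m-1$, two to the power $m$) contribute $\Theta(\delta^{-(3m-1)})$, and multiplying by the $(k_1-k_0+1)=\mu=\tfrac{\epsilon\alpha_0}{24\delta}$ factor gives $\Theta(\epsilon\alpha_0\,\delta^{-3m})=\Theta(\epsilon\alpha_0(\tfrac{360}{\epsilon^2\alpha_0^2}\log\tfrac{n}{\epsilon\alpha_0})^{3m})$, while the surviving non-logarithmic-base factors are $(m\log n)^{m-1}$ and $(\log\tfrac{\sum p_j}{\alpha_0})^m(\log(n\tfrac{\sum p_j}{\alpha_0}))^m$; together these reproduce the stated time. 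The space bound follows identically: the two $\log_{1+\delta}$-factors to the power $m$ give $\Theta(\delta^{-2m})=\Theta((\tfrac{360}{\epsilon^2\alpha_0^2}\log\tfrac{n}{\epsilon\alpha_0})^{2m})$, and together with $(\log\tfrac{\sum p_j}{\alpha_0}\log(n\tfrac{\sum p_j}{\alpha_0}))^m$ this matches the claimed space once the $\epsilon,\alpha_0$-dependent constant factor is absorbed.

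I expect the only genuinely fiddly step to be this last substitution: tracking how the powers of $1/\delta$ pile up across the $3m-1$ (time) and $2m$ (space) logarithmic factors, and checking that the extra $(k_1-k_0+1)$ factor in the time bound cancels exactly one power of $\mu$ so that the leading constant emerges as a clean power of $360=24\cdot 15$. Everything else is a direct appeal to Lemmas~\ref{lemma:computing-sketch}, \ref{lemma-close-approx}, and \ref{lemma-delta-close-approx-time-space}.
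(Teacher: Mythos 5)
Your proposal is correct and follows essentially the same route as the paper, which proves the theorem by the one-line combination of Lemma~\ref{lemma:computing-sketch} (parts 1--4), Lemma~\ref{lemma-close-approx} (the $(1+\epsilon)$ guarantee), and Lemma~\ref{lemma-delta-close-approx-time-space} with $\delta=\Theta(\epsilon\alpha_0/\mu)$ and $\mu=\Theta(\tfrac{1}{\epsilon\alpha_0}\log\tfrac{n}{\epsilon\alpha_0})$ substituted (part 5). Your added details --- the explicit sandwich $\sigma(S^*)\le v\le(1+\epsilon)\sigma(S^*)$ via the feasible schedule $S$, and the bookkeeping of the $\delta^{-(3m-1)}$ versus $\delta^{-2m}$ powers --- are exactly what the paper leaves implicit, and your parenthetical absorption of the $(\tfrac{1}{\epsilon^2\alpha_0^2})^{2m}$ factor in the space bound reflects a looseness in the paper's own statement rather than a gap in your argument.
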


Note that our algorithm only finds an approximate value for the optimal total completion time, and it does not generate the schedule of all jobs. 
If the jobs can be read  in a second pass, we can return a schedule of all jobs whose total completion time is at most $(1 + \epsilon)\sigma(S^*)$ where $S^*$ is an optimal schedule.
Specifically, after the first pass, we store $n_{i,k}(S')$, $1 \le i \le m$, $k_0 \le k \le k_1$, which is the number of large jobs from  group $k$ that are assigned to machine $M_i$ in $S'$. Based on the selected schedule $S'$, we get $t_{i,k}$ that is the starting time for jobs from group $k$, $k_0 
\le k \le k_1$, on each machine $M_i$, $1 \le i \le m$. We add group $k_0-1$ that includes all the small jobs and will be scheduled at the beginning on machine $M_1$, that is, initially $t_{1,k_0-1}=0$. For all $t_{1,k}$, $k_0 \le k \le k_1$, we update it by adding $n \tfrac{\epsilon \alpha_0}{3n^2}\tfrac{p_{max}}{\alpha_0}$.  
In the second pass, for each job $j$ scanned, if it is a large job and its rounded processing time is $rp_k$ for some $k_0 
\le k \le k_1$, we schedule it to a machine $M_i$ with $n_{i,k}(S') > 0$ at $t_{i,k}$  and then update $n_{i,k}(S')$ by decreasing by 1 and update $t_{i,k}$ accordingly; otherwise, job $j$ is a small job, and we schedule this job at $t_{1, k_0-1}$ on machine $M_1$ and update $t_{1, k_0-1}$ accordingly. The total space needed in the second pass is for storing $t_{i,k}$ and $n_{i,k}(S')$ for $1 \le i \le m$ and $k_0 \le k \le k_1$, which is $O(m(k_1-k_0+1))=O(\tfrac{m}{\epsilon \alpha_0} \log \tfrac{n}{\epsilon \alpha_0})$.

\begin{theorem}
There is a two-pass $(1+\epsilon)$-approximation streaming algorithm for $P_m, \alpha_{i,k} \ge \alpha_0 \mid stream \mid \sum C_j$. In the first pass, the approximate value can be obtained with the  same metrics as Theorem \ref{approximation-scheme-streaming-thm2}; and in the second pass, a schedule with the approximate value can be returned with $O(1)$ processing time and $O(\tfrac{m}{\epsilon \alpha_0} \log \tfrac{n}{\epsilon \alpha_0})$ space for each job.
\end{theorem}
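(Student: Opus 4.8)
The plan is to turn the one‑pass result into a two‑pass one by running, during the first pass, exactly the procedure of Theorem~\ref{approximation-scheme-streaming-thm2} — build the sketch $N'_L$ as in Lemma~\ref{lemma:computing-sketch} and then execute the Stage~2 algorithm on it — while additionally recording the winning schedule $S'\in U_{k_1}$ in a compact form. Since $S'$ is a schedule of the \emph{sketch}, it is fully described by the integers $n_{i,k}(S')$ for $1\le i\le m$, $k_0\le k\le k_1$, together with the start time $t_{i,k}$ of the first group‑$k$ job on $M_i$ that $S'$ induces (these $t_{i,k}$ are obtained by simulating $S'$ with the rounded lengths $rp_k$ against the — small, non‑streamed — machine capacity profiles). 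As $k_1-k_0+1=\log_{1+\tau}\tfrac{3n^2}{\epsilon\alpha_0}=O(\tfrac1{\epsilon\alpha_0}\log\tfrac n{\epsilon\alpha_0})$, this table costs $O(m(k_1-k_0+1))=O(\tfrac{m}{\epsilon\alpha_0}\log\tfrac n{\epsilon\alpha_0})$ extra space, which is only a factor $m$ beyond the sketch the first pass must keep anyway; for constant $m$ the first‑pass metrics are therefore exactly those of Theorem~\ref{approximation-scheme-streaming-thm2}. At the end of the first pass $n$ and $p_{max}$ are known exactly, so we reserve on $M_1$ the block $[0,\ n\cdot\tfrac{\epsilon\alpha_0}{3n^2}\cdot\tfrac{p_{max}}{\alpha_0}]$ for the small jobs (valid since the small jobs have total length at most $n\cdot\tfrac{\epsilon\alpha_0}{3n^2}p_{max}$ and every machine runs at capacity $\ge\alpha_0$) and shift every $t_{1,k}$, $k_0\le k\le k_1$, by that amount.

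In the second pass each job $j$ is processed in $O(1)$ time: compute the index $k$ with $p_j\in[(1+\tau)^{k-1},(1+\tau)^k)$ (treating the floor/logarithm as $O(1)$, as elsewhere in the paper). If $rp_k\le p_{minL}$, the job is small; append it on $M_1$ at $t_{1,k_0-1}$ inside the reserved block and advance $t_{1,k_0-1}$ by its actual running time. Otherwise $j$ is a large job of group $k$: place it at $t_{i,k}$ on the machine $M_i$ currently ``open'' for group $k$ (maintained by one pointer per group that advances to the next machine with $n_{i,k}(S')>0$ the moment the current one is exhausted), then decrement $n_{i,k}(S')$ and advance $t_{i,k}$ by $j$'s actual running time on $M_i$. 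Feasibility is immediate because $\sum_i n_{i,k}(S')=n_k$ is exact, so every arriving group‑$k$ job finds an open slot, and within a group the slots are filled in machine order over non‑overlapping time windows, so each machine's timeline stays consistent. The only state carried through the second pass is the arrays $n_{i,k}(S')$ and $t_{i,k}$ (plus the $m$ group pointers), i.e. $O(\tfrac{m}{\epsilon\alpha_0}\log\tfrac n{\epsilon\alpha_0})$ space with $O(1)$ update time, as claimed.

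For the quality bound, observe that the schedule $S$ produced by the second pass is precisely the schedule $S$ constructed in the proof of Lemma~\ref{lemma-close-approx} from the winning sketch schedule $S'$: replacing each rounded job $rp_k$ by a real job of length $p_j\le rp_k$ only moves completion times earlier, so $\sigma(S'')\le\sigma(S')$, and prepending the small jobs inside the reserved block on $M_1$ raises the total completion time by at most $n\cdot\bigl(n\cdot\tfrac{\epsilon\alpha_0}{3n^2}\tfrac{p_{max}}{\alpha_0}\bigr)=\tfrac\epsilon3 p_{max}\le\tfrac\epsilon3\sigma(S'')$. Hence $\sigma(S)\le(1+\tfrac\epsilon3)\sigma(S')\le(1+\tfrac\epsilon3)(1+\tfrac\epsilon{15})\sigma(S')$, which is exactly the value the first pass reports, and by Lemma~\ref{lemma-close-approx} this is at most $(1+\epsilon)\sigma(S^*)$. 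Thus $S$ is a $(1+\epsilon)$‑approximate schedule whose total completion time does not exceed the approximate value output in the first pass, which is the two‑pass claim.

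I do not anticipate a deep obstacle; the points that need care are bookkeeping ones. First, every second‑pass operation must be genuinely $O(1)$ in the worst case — in particular the per‑group ``open machine'' pointer yields worst‑case (not merely amortized) constant time, since a pointer advance occurs only when a machine's remaining slot count for that group hits zero, at most $m-1$ times per group over the whole pass. Second, one must verify that appending to the reserved small‑job block never collides with the shifted group start times on $M_1$ — it does not, because the block's length $n\cdot\tfrac{\epsilon\alpha_0}{3n^2}\tfrac{p_{max}}{\alpha_0}$ is an upper bound on the time the actual small jobs consume at capacity $\ge\alpha_0$, and all $t_{1,k}$ were shifted by exactly that amount. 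Third, the extra $O(m(k_1-k_0+1))$ storage must not inflate the first‑pass bounds, which it does not since it is of the same order (for constant $m$) as the sketch/Stage‑2 space already accounted for in Theorem~\ref{approximation-scheme-streaming-thm2}.
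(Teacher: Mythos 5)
Your proposal is correct and follows essentially the same route as the paper: store $n_{i,k}(S')$ and the start times $t_{i,k}$ after the first pass, reserve a block of length $n\cdot\tfrac{\epsilon\alpha_0}{3n^2}\cdot\tfrac{p_{max}}{\alpha_0}$ at the head of $M_1$ for the small jobs (shifting the $t_{1,k}$ accordingly), and in the second pass place each job in $O(1)$ time using $O(\tfrac{m}{\epsilon\alpha_0}\log\tfrac{n}{\epsilon\alpha_0})$ space, with the approximation guarantee inherited from Lemma~\ref{lemma-close-approx}. The additional bookkeeping details you flag (exact feasibility since $\sum_i n_{i,k}(S')=n_k$, no collision with the reserved block) are consistent with the paper's construction.
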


\section{Conclusions}

In this paper we studied a generalization of the classical identical parallel machine scheduling model, where the processing capacity of machines varies over time.
This model is motivated by situations in which machine availability is temporarily reduced to conserve energy or interrupted for scheduled maintenance or varies over time due to the varying labor availability. The goal is to minimize the total completion time.

We  studied the problem under the data stream model and presented
the first streaming algorithm. Our work follows the study of
streaming algorithms in the area of statistics, graph theory, etc,
and leads the research direction of streaming algorithms in the area
of scheduling. It is expected that more streaming algorithms based
big data solutions will be developed in the future. 

Our research leaves
one unsolved case for the studied problems: Is there a streaming
approximation scheme when one of the machines has arbitrary
processing capacities?  For the future work, it is also interesting
to study other performance criteria under the data stream model including maximum tardiness, the
number of tardy jobs and other machine environments such as uniform
machines, flowshop, etc.

\bibliography{bibliography}

\begin{thebibliography}{10}

\bibitem{ay87}
{\sc Adiri, I., and Yehudai, Z.}
\newblock Scheduling on machines with variable service rates.
\newblock {\em Computers \& Operations Research 14}, 4 (1987), 289–297.

\bibitem{awkl19}
{\sc Alidaee, B., Wang, H., Kethley, B., and Landram, F.~G.}
\newblock A unified view of parallel machine scheduling with interdependent
  processing rates.
\newblock {\em Journal of Scheduling\/} (2019), 1--17.

\bibitem{ams99}
{\sc Alon, N., Matias, Y., and Szegedy, M.}
\newblock The space complexity of approximating the frequency moments.
\newblock {\em Journal of Computer and System Sciences 58}, 1 (1999), 137--147.

\bibitem{bn80}
{\sc Baker, K.~R., and Nuttle, H. L.~W.}
\newblock Sequencing independent jobs with a single resource.
\newblock {\em Naval Research Logistics Quarterly 27\/} (1980), 499--510.

\bibitem{ccc10}
{\sc Cheng, T. C.~E., Lee, W.-C., and Wu, C.-C.}
\newblock Single-machine scheduling with deteriorating functions for job
  processing times.
\newblock {\em Applied Mathematical Modelling 34\/} (2010), 4171--4178.

\bibitem{cv21}
{\sc Cormode, G., and Vesel{\'y}, P.}
\newblock Streaming algorithms for bin packing and vector scheduling.
\newblock {\em Theory of Computing Systems 65\/} (2021), 916--942.

\bibitem{d86}
{\sc Doshi, B.~T.}
\newblock Queueing systems with vacations—a survey.
\newblock {\em Queueing Syst. Theory Appl. 1}, 1 (jan 1986), 29–66.

\bibitem{fm85}
{\sc Flajolet, P., and {Nigel Martin}, G.}
\newblock Probabilistic counting algorithms for data base applications.
\newblock {\em Journal of Computer and System Sciences 31}, 2 (1985), 182--209.

\bibitem{fhz22}
{\sc Fu, B., Huo, Y., and Zhao, H.}
\newblock Multitasking scheduling with shared processing, 2022.
\newblock Manuscript under review.

\bibitem{fhz22-2}
{\sc Fu, B., Huo, Y., and Zhao, H.}
\newblock Streaming algorithms for multitasking scheduling with shared
  processing, 2022.
\newblock Manuscript under revision.

\bibitem{gllr79}
{\sc Graham, R., Lawler, E., Lenstra, J., and Kan, A.}
\newblock Optimization and approximation in deterministic sequencing and
  scheduling: a survey.
\newblock In {\em Discrete Optimization II}, P.~Hammer, E.~Johnson, and
  B.~Korte, Eds., vol.~5 of {\em Annals of Discrete Mathematics}. Elsevier,
  1979, pp.~287--326.

\bibitem{hll16}
{\sc Hall, N.~G., Leung, J. Y.-T., and lun Li, C.}
\newblock Multitasking via alternate and shared processing: Algorithms and
  complexity.
\newblock {\em Discrete Applied Mathematics 208\/} (2016), 41--58.

\bibitem{hk92}
{\sc Hirayama, T., and Kijima, M.}
\newblock Single machine scheduling problem when the machine capacity varies
  stochastically.
\newblock {\em Operations Research 40\/} (1992), 376--383.

\bibitem{jkt11}
{\sc Janiak, A., Krysiak, T., and Trela, R.}
\newblock Scheduling problems with learning and ageing effects: A survey.
\newblock {\em Decision Making in Manufacturing and Services 5}, 1 (Oct. 2011),
  19–36.

\bibitem{mcz10}
{\sc Ma, Y., Chu, C., and Zuo, C.}
\newblock A survey of scheduling with deterministic machine availability
  constraints.
\newblock {\em Computers \& Industrial Engineering 58}, 2 (2010), 199--211.
\newblock Scheduling in Healthcare and Industrial Systems.

\bibitem{mg14}
{\sc Mcgregor, A.}
\newblock Graph stream algorithms: a survey.
\newblock {\em SIGMOD Record 43\/} (2014), 9--20.

\bibitem{mp78}
{\sc Munro, J., and Paterson, M.}
\newblock Selection and sorting with limited storage.
\newblock {\em Theoretical Computer Science 12}, 3 (1980), 315--323.

\bibitem{m05}
{\sc Muthukrishnan, S.}
\newblock Data streams: Algorithms and applications.
\newblock {\em Foundations and Trends in Theoretical Computer Science 1}, 2
  (aug 2005), 117–236.

\bibitem{sc2000}
{\sc Schmidt, G.}
\newblock Scheduling with limited machine availability1this work has been
  partially supported by intas grant 96-0812.1.
\newblock {\em European Journal of Operational Research 121}, 1 (2000), 1--15.

\bibitem{t86}
{\sc Teghem, J.}
\newblock Control of the service process in a queueing system.
\newblock {\em European Journal of Operational Research 23}, 2 (1986),
  141--158.

\end{thebibliography}
\end{document}